\newlist{inlinelist}{enumerate*}{1}
\setlist*[inlinelist,1]{%
  label=(\roman*),
}
\definecolor{listingBG}{HTML}{FFFFCB}%
\definecolor{listingFrame}{HTML}{BBBB98}%
\definecolor{listingLineno}{rgb}{0.5,0.5,1.0}%
\definecolor{LightGrey}{rgb}{0.975,0.975,0.975}
\lstdefinelanguage{btm}{
	commentstyle=\color{Gray},
	morecomment=[l]{//},
	morecomment=[s]{/*}{*/},
	classoffset=0,
        escapechar=\$,
	morekeywords={transaction,input,output,key,network,package},
	keywordstyle=\color{Blue}\bfseries,
	classoffset=1,
	morekeywords={sig,versig,fun,unit,int,string,bool,address,uint},
	keywordstyle=\color{TealBlue},
	classoffset=2,
	morekeywords={BTC,true},
	keywordstyle=\color{Plum}\bfseries,
}
\lstdefinelanguage{solidity}{
	commentstyle=\color{Gray},
	morecomment=[l]{//},
	morecomment=[s]{/*}{*/},
	classoffset=0,
        escapechar=\$,
	morekeywords={struct,mapping,function,this,public,private,static,final,class,extends,switch,case,break,finally,try,catch,return,if,else,new},
	keywordstyle=\color{Blue}\bfseries,
	classoffset=1,
	morekeywords={unit,int,string,bool,address,uint},
	keywordstyle=\color{TealBlue},
	classoffset=2,
	morekeywords={ether,wei,finney,contract,send,throw,msg,sender,value},
	keywordstyle=\color{Plum}\bfseries,
}
\lstdefinelanguage{java}{
	escapechar=\$,
        commentstyle=\color{Gray},
	morecomment=[l]{//},
	morecomment=[s]{/*}{*/},
	morestring=[b]",
        classoffset=0,
	morekeywords={public,private,static,final,class,extends,switch,case,break,finally,try,catch,void,int,boolean,throws,throw,return,if,else,new},
	keywordstyle=\color{keyword}\bfseries
}
\newcommand{\ifempty}[3]{%
  \ifthenelse{\isempty{#1}}{#2}{#3}%
}
\newcommand{\ifdots}[3]{%
  \ifthenelse{\equal{#1}{...}}{#2}{#3}%
}
\newcommand{\hidden}[1]{}
\newcommand{\mypar}[1]{\vspace{-0.2cm}\paragraph*{#1}}
\newenvironment{nscenter}
 {\parskip=0pt\par\nopagebreak\centering}
 {\parskip=2pt\par\noindent} 
\newcommand{\Real}[1]{\mathrm{Real}}
\def\etc{etc.\@\xspace}
\newcommand{\eg}{e.g.\@\xspace}
\newcommand{\ie}{i.e.\@\xspace}
\newcommand{\subst}[2]{\{{#1}/{#2}\}}
\newcommand{\sig}[3][]{\mathit{sig}^{#1}_{#2}\ifempty{#3}{}{({#3})}}
\newcommand{\BTC}{\textup{%
  \leavevmode
  \vtop{\offinterlineskip 
    \setbox0=\hbox{B}%
    \setbox2=\hbox to\wd0{\hfil\hskip-.03em
    \vrule height .3ex width .15ex\hskip .08em
    \vrule height .3ex width .15ex\hfil}
    \vbox{\copy2\box0}\box2}}\xspace}
\def\pmvColor{\color{ForestGreen}}
\newcommand{\pmvFmt}[1]{{\pmvColor{\sf #1}}}
\newcommand{\PartT}{\pmvFmt{{Hon}}\xspace} 
\newcommand{\PartGC}{\pmvFmt{PartG}\xspace}  
\newcommand{\partC}[1]{\pmvFmt{part}({#1})\xspace} 
\newcommand{\pmv}[2][]{\pmvFmt{#2}_{\pmvColor{#1}}\xspace}
\newcommand{\pmvA}[1][]{\pmv[{#1}]{A}}
\newcommand{\pmvB}[1][]{\pmv[{#1}]{B}}
\newcommand{\pmvC}[1][]{\pmv[{#1}]{C}}
\newcommand{\pmvM}[1][]{\pmv[{#1}]{M}} 
\newcommand{\pmvP}[1][]{\pmv[{#1}]{P}}
\newcommand{\Adv}{\pmv{Adv}} 
\newcommand{\PmvP}[1][]{\pmvFmt{\mathcal{P}}_{#1}} 
\def\txColor{\color{MidnightBlue}}
\newcommand{\txFmt}[1]{{\txColor{\sf #1}}}
\newcommand{\tx}[2][]{\txFmt{#2}_{\txColor{#1}}}
\newcommand{\txT}[1][]{\tx[#1]{T}}
\newcommand{\txIn}[2][]{{\small\textsf{in}}\ifempty{#1}{\ifempty{#2}{}{: {#2}}}{({#1})\ifempty{#2}{}{: {#2}}}}
\newcommand{\txWit}[2][]{{\small\textsf{wit}}\ifempty{#1}{\ifempty{#2}{}{: {#2}}}{({#1})\ifempty{#2}{}{: {#2}}}}
\newcommand{\txOut}[2][]{{\small\textsf{out}}\ifempty{#1}{\ifempty{#2}{}{: {#2}}}{({#1})\ifempty{#2}{}{: {#2}}}}
\newcommand{\txAfterAbs}[2][]{{\small\textsf{absLock}}\ifempty{#1}{\ifempty{#2}{}{: {#2}}}{({#1})\ifempty{#2}{}{: {#2}}}}
\DeclareMathAlphabet{\mathbfsf}{\encodingdefault}{\sfdefault}{bx}{n}
\newcommand{\fv}[1]{\ifempty{#1}{\operatorname{fv}}{\operatorname{fv}(#1)}}
\newcommand{\mmid}{\,\|\,}
\newcommand{\irule}[2]{\dfrac{#1}{#2}}
\newcommand{\bnfdef}{::=}
\newcommand{\bnfmid}{\;|\;}
\newcommand{\smallnrule}[1]{{\tiny \textsc{#1}}}
\newcommand{\sem}[2][]{\mbox{\ensuremath{\llbracket{#2}\rrbracket_{#1}}}}
\newcommand{\Nat}{\mathbb{N}}
\newcommand{\bind}[2]{\nicefrac{#2}{#1}}
\newcommand{\setenum}[1]{\{#1\}}
\newcommand{\qedef}{\ensuremath{\diamond}}
\crefname{appendix}{appendix}{appendices}
\Crefname{appendix}{Appendix}{Appendices}
\crefname{notation}{notation}{notations}
\Crefname{notation}{Notation}{Notations}
\definecolor{LightGrey}{rgb}{0.95,0.95,0.95}
\definecolor{keyword}{HTML}{7F0055}
\newlength\replength
\newcommand\repfrac{.1}
\newcommand\rulewidth{.6pt}
\newcommand\tdashfill[1][\repfrac]{\cleaders\hbox to \replength{%
  \smash{\rule[\arraystretch\ht\strutbox]{\repfrac\replength}{\rulewidth}}}\hfill}
\newcommand\tdotfill[1][\repfrac]{\cleaders\hbox to \replength{%
  \smash{\raisebox{\arraystretch\dimexpr\ht\strutbox-.1ex\relax}{.}}}\hfill}
\newcommand{\var}[2][]{#2_{#1}} 
\newcommand{\varX}[1][]{\var[#1]{x}} 
\renewcommand{\varXi}[1][]{\var[#1]{x'}} 
\newcommand{\varY}[1][]{\var[#1]{y}} 
\newcommand{\varZ}[1][]{\var[#1]{z}}
\newcommand{\varSig}[1][]{\var[#1]{\varsigma}} 
\newcommand{\varph}[2][]{#2_{#1}} 
\newcommand{\varphX}[1][]{\varph[#1]{d}} 
\newcommand{\varphXi}[1][]{\varph[#1]{d'}} 
\newcommand{\varphY}[1][]{\varph[#1]{e}} 
\newcommand{\varphYi}[1][]{\varph[#1]{e'}} 
\newcommand{\dmv}[1][]{\chi_{#1}}
\newcommand{\dmvi}[1][]{\chi'_{#1}}
\newcommand{\secr}[2][]{\mathord{{#2}_{#1}}}
\newcommand{\presecret}[2]{{#1}\!:\!\textup{\texttt{secret}}\,{\secr{#2}}}
\newcommand{\secrA}[1][]{\secr[#1]{a}} 
\newcommand{\secrAi}[1][]{\secr[#1]{a'}} 
\newcommand{\secrB}[1][]{\secr[#1]{b}} 
\newcommand{\secretC}[1]{\textup{\texttt{secrets}}({#1})\xspace}
\newcommand{\const}[2][]{#2_{#1}} 
\newcommand{\constK}[1][]{\const[#1]{k}}
\newcommand{\constT}[1][]{\const[#1]{t}}
\newcommand{\val}[2][]{#2_{#1}} 
\newcommand{\valV}[1][]{\val[#1]{v}}
\newcommand{\valVi}[1][]{\val[#1]{v'}}
\newcommand{\expe}{\ensuremath{e}}
\newcommand{\versig}[2]{{\sf versig}_{#1}({#2})}
\newcommand{\true}{\mathit{true}}
\def\contrColor{\color{RubineRed}}
\newcommand{\contrFmt}[1]{{\contrColor{#1}}}
\newcommand{\contrC}[1][]{\mathord{\contrFmt{C}_{\contrColor{#1}}}}
\newcommand{\contrCi}[1][]{\mathord{\contrC[#1]\contrColor{'}}}
\newcommand{\contrCii}[1][]{\mathord{\contrC[#1]\contrColor{''}}}
\newcommand{\contrD}[1][]{\mathord{\contrFmt{D}_{\contrColor{#1}}}}
\newcommand{\contrDi}[1][]{\mathord{\contrD[#1]\contrColor{'}}}
\newcommand{\expE}[1][]{\ensuremath{E}_{#1}}
\newcommand{\contrG}[1][]{\mathord{\contrFmt{G}_{\contrColor{#1}}}}
\newcommand{\contrGi}[1][]{\mathord{\contrG[#1]\contrColor{'}}}
\newcommand{\predP}[1][]{\mathord{p_{#1}}}
\newcommand{\contrAdv}[3][]{\setenum{#2}^{#1}{#3}}  
\newcommand{\contrAdvC}[2]{\mathcal{C}} 
\newcommand{\persdep}[3]{\mbox{\ensuremath{{#1}\textup{:}\,{#2}\,\textup{\texttt{@}}\,{#3}}}}
\newcommand{\assign}[3]{{#1}:{#2}\leftarrow{#3}}
\newcommand{\putname}{\textup{\texttt{put}}}
\newcommand{\revealname}{\textup{\texttt{reveal}}}
\newcommand{\wherename}{\textup{\texttt{if}}}
\newcommand{\andputname}{\&}
\newcommand{\putCtrue}[2]{%
  \ifempty{#2}
  {\putCnoreveal{#1}}
  {\ifempty{#1}
    {\revealname \, {#2}}
    {\putname \, {#1} \, \revealname \, {#2}}}
}
\newcommand{\putC}[3][]{
  \ifempty{#1}
  {\putCtrue{#2}{#3}}
  {\ifempty{#2}
    {\revealname \, {#3} \, \wherename \, {#1}}
    {\putname \, {#2} \, \andputname \, \revealname \, {#3} \, \wherename \, {#1}}
  }
}
\newcommand{\cnil}{0}
\newcommand{\splitname}{\textup{\texttt{split}}}
\newcommand{\splitC}[1]{\splitname \ifempty{#1}{}{\; #1}}
\newcommand{\splitB}[2]{{#1} \rightarrow {#2}}
\newcommand{\withdrawname}{\textup{\texttt{withdraw}}}
\newcommand{\withdrawC}[1]{\withdrawname\ifempty{#1}{}{\; {\pmv{#1}}}}
\newcommand{\authC}[2]{{\pmv{#1}}\,\textup{:}\,{#2}}
\newcommand{\aftername}{\texttt{after}}
\newcommand{\afterC}[2]{\textup{\aftername}\,{#1}\,\textup{:}\,{#2}}
\newcommand{\cVar}[2][]{\texttt{\textup{#2}}_{#1}}
\newcommand{\cVarX}[1][]{\cVar[#1]{X}}
\newcommand{\cVarY}[1][]{\cVar[#1]{Y}}
\newcommand{\procParamA}[1][]{\alpha_{#1}}
\newcommand{\procParamB}[1][]{\beta_{#1}}
\newcommand{\sexp}[1][]{\mathcal{E}_{#1}}
\newcommand{\sexpi}[1][]{\mathcal{E}'_{#1}}
\newcommand{\decl}[2]{{#1}({#2})}
\newcommand{\call}[2]{{#1}\langle{#2}\rangle}
\newcommand{\callX}[1]{\call{\cVarX}{#1}}
\newcommand{\callY}[1]{\call{\cVarY}{#1}}
\newcommand{\adv}[1]{{\pmvColor *}:\textup{\texttt{rngt}}\ifempty{#1}{}{\;}{#1}}
\newcommand{\extadv}[2]{{#1}:\textup{\texttt{rngt}}\ifempty{#1}{}{\;}{#2}}
\newcommand{\ncadv}[1]{\textup{\texttt{call}}\ifempty{#1}{}{\;}{#1}}
\newcommand{\confG}[1][]{\Gamma_{#1}}
\newcommand{\confGi}[1][]{\Gamma'_{#1}}
\newcommand{\confGii}[1][]{\Gamma''_{#1}}
\newcommand{\confGiii}[1][]{\Gamma'''_{#1}}
\newcommand{\confD}[1][]{\Delta_{#1}}
\newcommand{\confContr}[3][]{\langle {#2}, {#3} \rangle_{#1}}
\newcommand{\confDep}[3][]{\langle {#2}, {#3} \rangle_{#1}}
\newcommand{\confAuth}[2]{{#1}[{#2}]}
\newcommand{\confRev}[3]{{#1} : {#2}\#{#3}}
\newcommand{\confSec}[3]{\setenum{\confRev{#1}{#2}{#3}}}
\newcommand{\confTsep}{\mid}
\newcommand{\confT}[2]{{#1} \confTsep {#2}}
\newcommand{\authLab}[2]{{#1}:{\ifdots{#2}{\cdots}{#2}}} 
\newcommand{\authCommit}[3][]{\textbf{\#} \rhd \contrAdv[#1]{#2}{#3}}
\newcommand{\authAdv}[4][]{{#2} \rhd \contrAdv[#1]{#3}{#4}}
\newcommand{\authBranch}[2]{{#1} \rhd {#2}}
\newcommand{\authJoin}[4]{{#1},{#2} \rhd \confDep{#3}{#4}}
\def\compileColor{\color{Plum}}
\newcommand{\compileFmt}[1]{{\compileColor{\mathbf{#1}}}}
\newcommand{\valMap}{\compileFmt{val}}
\newcommand{\txMap}{\compileFmt{txout}}
\newcommand{\compilename}{\BTC}
\newcommand{\vectxT}[1][]{\mathcal{T}_{#1}}
\newcommand{\compile}[1]{\compileFmt{\compilename_{adv}}\ifempty{#1}{}{({#1})}}
\newcommand{\compileD}[1]{\compileFmt{\compilename_{D}}\ifempty{#1}{}{({#1})}}
\newcommand{\compileDscript}[1]{\compileFmt{\compilename_{out}}\ifempty{#1}{}{({#1})}}
\newcommand{\compileK}[1]{\compileFmt{K}({#1})}
\newcommand{\nonce}[1]{r_{#1}}   
\newcommand{\rndR}{r}            
\newcommand{\runnameS}{\mathit{R}}
\newcommand{\runnameC}{\mathit{R}}
\newcommand{\runS}[1][]{\runnameS^{s}_{#1}}
\newcommand{\runC}[1][]{\runnameC^{c}_{#1}}
\newcommand{\stratS}[1]{\Sigma_{#1}^{\it s}}
\newcommand{\stratC}[1]{\Sigma_{#1}^{\it c}}
\newcommand{\stratSSet}{\mathbf{\Sigma}^{\it s}} 
\newcommand{\stratCSet}{\mathbf{\Sigma}^{\it c}} 
\newcommand{\stratMap}{\aleph}
\newcommand{\coherRel}[3]{{#1} \sim_{#3} {#2}}
\begin{document}

\title{Renegotiation and recursion in Bitcoin contracts}

\author{Massimo Bartoletti\inst{1} \and
Maurizio Murgia\inst{2} \and
Roberto Zunino\inst{2}}
\institute{University of Cagliari, Italy \and
University of Trento, Italy}

\maketitle

\begin{abstract}
  BitML is a process calculus to express smart contracts 
  that can be run on Bitcoin.
  One of its current limitations is that, once a contract has been
  stipulated, the participants cannot renegotiate its terms:
  this prevents expressing common financial contracts, where 
  funds have to be added by participants at run-time.
  In this paper, 
  we extend BitML with a new primitive for contract renegotiation.
  At the same time, the new primitive can be used to write recursive contracts,
  which was not possible in the original BitML.
  We show that, despite the increased expressiveness,
  it is still possible to execute BitML on standard Bitcoin,
  preserving the security guarantees of BitML.
\end{abstract}

\section{Introduction}
\label{sec:intro}

Smart contracts --- 
computer protocols that regulate the exchange of assets
in trustless environments ---
have become popular with the growth of interest in blockchain technologies.
Mainstream blockchain platforms like Ethereum, Libra, and Cardano, 
feature expressive high-level languages for programming smart contracts.
This flexibility has a drawback in that
it may open the door to attacks that steal or tamper with the
assets controlled by vulnerable contracts~\cite{ABC17post,Luu16ccs}.

An alternative approach, pursued first by Bitcoin 
and more recently also by Algorand,
is to sacrifice the expressiveness of smart contracts 
to reduce the attack surface.
For instance, Bitcoin has a minimal
language for transaction redeem scripts,
containing only a limited set of logic, arithmetic, and cryptographic operations.
Despite the limited expressiveness of these scripts, 
it is possible to encode a variety of smart contracts
(like gambling games, escrow services, crowdfunding systems, \etc)
by suitably chaining transactions 
\cite{Andrychowicz14bw,Andrychowicz14sp,Andrychowicz16cacm,bitcoinsok,Banasik16esorics,BZ17bw,Bentov14crypto,Kumaresan14ccs,KumaresanB16ccs,Kumaresan15ccs,KumaresanVV16ccs,Miller16zerocollateral}.
The common trait of these works is that they render contracts
as cryptographic protocols, where participants
can exchange/sign messages, read the blockchain, and append transactions.
Verifying the correctness of these protocols is hard, 
since it requires to reason in a computational model, 
where participants can manipulate arbitrary bitstrings,
only being constrained to use PPTIME algorithms.

Departing from this approach, BitML~\cite{BZ18bitml} 
allows to write Bitcoin contracts in a high-level, process-algebraic language.
BitML features a compiler that translates contracts 
into sets of standard Bitcoin transactions,
and a sound and complete verification technique
of relevant trace properties~\cite{BZ19post}.
The computational soundness of the compiler guarantees that 
the execution of the compiled contract is coherent with 
the semantics of the source BitML specification,
even in the presence of adversaries.
Although BitML can express many of the Bitcoin contracts 
presented in the literature, 
it is not ``Bitcoin-complete'', \ie there exist contracts 
that can be executed on Bitcoin, but are not expressible in BitML~\cite{bitmlracket}.

\newcommand{\ZCB}{\ensuremath{\contrFmt{\it ZCB}}\xspace}
\newcommand{\ZCBi}{\ensuremath{\contrFmt{\it ZCB2}}\xspace}
\newcommand{\ZCBii}{\ensuremath{\contrFmt{\it ZCB3}}\xspace}
\newcommand{\Claim}{\ensuremath{\contrFmt{\it Claim}}\xspace}

\newcommand{\PayFull}{\ensuremath{\contrFmt{\it PayFull}}\xspace}
\newcommand{\Pay}{\ensuremath{\contrFmt{\it Pay}}\xspace}
\newcommand{\PayG}{\ensuremath{\contrFmt{\it PayG}}\xspace}
\newcommand{\Escrow}{\ensuremath{\contrFmt{\it Escrow}}\xspace}
\newcommand{\Resolve}{\ensuremath{\contrFmt{\it Resolve}}\xspace}
\newcommand{\EscrowPut}{\ensuremath{\contrFmt{\it EscrowPut}}\xspace}

For instance, consider a zero-coupon bond~\cite{PeytonJones00icfp},
where an investor $\pmvA$ pays $1 \BTC$ upfront to a bank $\pmvB$,
and receives back $2 \BTC$ after a maturity date (say, year 2030).
We can express this contract in BitML as follows.
First, as a precondition to the stipulation of the contract, 
we require both $\pmvA$ and $\pmvB$ to provide a deposit:
$\pmvA$'s deposit is $1 \BTC$, while $\pmvB$'s deposit is $2 \BTC$.
In BitML, we write this precondition as:
\[
\persdep{\pmvA}{1 \BTC}{\varX[1]} 
\mid
\persdep{\pmvB}{2 \BTC}{\varX[2]}
\]
where $\varX[1]$ and $\varX[2]$ are the identifiers of transactions containing
the required amount of bitcoins ($\BTC$).
Under this precondition, 
we can specify the zero-coupon bond contract $\ZCB$ as follows:
\begin{align*}
  \ZCB 
  & \; = \;
    \splitname \; \big( 
    \splitB{1\BTC}{\withdrawC{\pmvB}}
    \; \mid \;
    \splitB{2\BTC}{\afterC{\text{2030}}{\withdrawC{\pmvA}}}
    \big)
\end{align*}

Upon stipulation, all the deposits required in the preconditions
pass under the control of $\ZCB$, 
and can no longer be spent by $\pmvA$ and $\pmvB$.
The contract splits these funds in two parts:
$1\BTC$, that can be withdrawn by $\pmvB$ at any moment,
and $2\BTC$, that can be withdrawn by $\pmvA$ after the maturity date.

Although $\ZCB$ correctly implements the functionality of zero-coupon bounds, 
it is quite impractical:
for the whole period from the stipulation to the maturity date,
$2\BTC$ are frozen within the contract, and cannot be used by the bank
in any way.
Although this is a desirable feature for the investor,
since it guarantees that he will receive $2\BTC$ even if the bank fails,
it is quite undesirable for the bank.
In the real world, the bank would be free to use its own funds,
together with those of investors, 
to make further financial transactions through which to repay the investments.
The risk that the bank fails is mitigated
by external mechanisms, like insurances or government intervention.

In this paper we propose an extension of BitML that overcomes this issue.
The idea is to allow the contract participants to \emph{renegotiate} it
after stipulation, in a controlled way.
Renegotiation makes it possible to inject in the contract new funds,
that were not specified in the original precondition.
We can use this feature to solve the issue with the $\ZCB$ contract.
The new precondition is
\(
\persdep{\pmvA}{1 \BTC}{\varX[1]} 
\),
\ie we only require $\pmvA$'s deposit. 
The revised contract is:
\begin{align*}
  \ZCBi
  & \; = \; \splitname \; \big( 
    \splitB{1\BTC}{\withdrawC{\pmvB}}
    \; \mid \;
    \splitB{0\BTC}{\adv{\callX{}}}
    \big)
  \\
  \callX{} 
  & \;= \; \contrAdv{\persdep{\pmvB}{2 \BTC}{\varphX}}{\;\afterC{\text{2030}}{\withdrawC{\pmvA}}}
\end{align*}

As before, the bank can withdraw $1\BTC$ at any moment after stipulation.
In the second part of the $\splitname$, 
the participants renegotiate the contract:
if they both agree, $0\BTC$ pass under the control of the contract $\callX{}$.
The precondition of $\callX{}$ requires the bank 
to provide $2\BTC$ in a fresh deposit;
upon renegotiation, $\pmvA$ can withdraw $2\BTC$ after the maturity date.
The crucial difference with $\ZCB$ is that the deposit variable $\varphX$
is instantiated at \emph{renegotiation} time, unlike $\varX$,
which must be fixed at \emph{stipulation} time.

The revised contract $\ZCBi$ solves the problem of $\ZCB$, 
in that it no longer freezes $2\BTC$ for the whole duration of the bond:
the bank could choose to renegotiate the contract,
paying $2\BTC$, just before the maturity date.
This flexibility comes at a cost, since $\pmvA$ loses
the guarantee to eventually receive $2\BTC$.
To address this issue we need to add, as in the real world,
an external mechanism.
More specifically, we assume an insurance company $\pmv{I}$ that,
for an annual premium of $p \BTC$ paid by the bank, 
covers a face amount of $f \BTC$ (with $2 > f > 10 p$):
\[
\persdep{\pmvA}{1 \BTC}{\varX[1]} 
\mid
\persdep{\pmvB}{p \BTC}{\varX[2]}
\mid
\persdep{\pmv{I}}{f \BTC}{\varX[3]}
\]
We revise the bond contract as follows:
\begin{align*}
  \ZCBii
  & \; = \; \splitname \; \big( 
    \splitB{1 \BTC}{\withdrawC{\pmvB}}
  \\
  & \hspace{44pt} \mid\hspace{1pt}
    \splitB{p \BTC}{\withdrawC{\pmv{I}}}
  \\
  & \hspace{44pt} \mid
    \splitB{f \BTC}{\adv{\callX{1}} + \afterC{2021}{\withdrawC{\pmvA}}}
    \big)
  \\
  \callX{n \in 1..9} 
  & \;= \; \contrAdv{\persdep{\pmvB}{p \BTC}{\varphX}}{}
  \\
  & \hspace{20pt} \splitname \; \big( \splitB{p\BTC}{\withdrawC{\pmv{I}}}
  \\
  & \hspace{44pt} \mid
  \splitB{f\BTC}{\adv{\callX{n+1}} + \afterC{(2021+n)}{\withdrawC{\pmvA}}}
  \big)
  \\
  \callX{10} 
  & \;= \; \contrAdv{\persdep{\pmvB}{2 \BTC}{\varphX}}{}
  \\
  & \hspace{20pt} \splitname \; \big( \splitB{f\BTC}{\withdrawC{\pmv{I}}}
  \\
  & \hspace{44pt} \mid
    \splitB{2\BTC}{\afterC{\text{2030}}{\withdrawC{\pmvA}}}
    \big)
\end{align*}

The contract starts by transferring $1\BTC$ to the bank,
and the first year of the premium to the insurer.
The remaining $f\BTC$ are transferred to the renegotiated contract
$\callX{1}$, or, if the renegotiation is not completed by 2021,
to the investor.

We remark that the pattern $\contrD + \afterC{t}{\contrDi}$,
where $\contrD$ requires some authorizations but $\contrDi$ does not,
is rather common in BitML,
as it ensures that the contract can proceed
even if the authorizations are not provided.
Indeed, in such case an honest participant is enough to
execute $\contrDi$ after time $t$.
By suitably exploiting this pattern, it is possible to guarantee
that a BitML contract enjoys liveness, 
by just assuming that at least one participant is honest.

The contracts $\callX{n}$, for $n \in 1..9$, 
allow the insurer to receive the annual premium until 2030:
if the bank does not renegotiate the contract for the following year
(paying the corresponding premium), then
the investor can redeem the face amount of $f \BTC$.
Finally, the contract $\callX{10}$ can be triggered if the bank deposits
the $2\BTC$: when this happens, the face amount is given back to the insurer,
and the investor can redeem $2\BTC$ after the maturity date.
 
Compared to $\ZCBi$, the contract $\ZCBii$ offers more protection 
to the investor.
To see why, we must evaluate $\pmvA$'s payoff for all the possible 
behaviours of the other participants.
If $\pmvB$ and $\pmv{I}$ are both honest,
then $\pmvA$ will redeem $2 \BTC$, as in the ideal contract $\ZCB$.
Instead,
if either $\pmvB$ or $\pmv{I}$ do not accept to renegotiate some $\callX{n}$, 
then $\pmvA$ can redeem $f \BTC$ as a partial compensation
(unlike in $\ZCBi$, where $\pmvA$ just loses $1\BTC$).
In the real world, $\pmvA$ could use this compensation
to cover the legal fee to sue the bank in court;
also, $\pmv{I}$ could \eg increase the premium for future interactions 
with $\pmvB$.
By further refining the contract, 
we could model these real-world mechanisms as oracles, 
which sanction dishonest participants
according to the evidence collected in the blockchain
and in messages broadcast by participants.
For instance, if $\pmvB$ and $\pmv{I}$ accept the renegotiation $\callX{n}$
but $\pmvA$ does not, 
then the oracle would be able to detect $\pmvA$'s dishonesty
by inspecting the authorizations broadcast in year $2021+n$.
The sanction could consist \eg in blacklisting $\pmvA$, 
so to prevent her from buying other bonds from $\pmvB$.

\mypar{Contributions.}
We summarise our main contributions as follows:
\begin{itemize}

\item We extend BitML with the renegotiation primitive $\adv{\callX{}}$,
  suitably adapting the language syntax and semantics.
  The new primitive increases the expressiveness of BitML:
  besides allowing participants to provide new deposits and secrets
  at run-time, it also allows for \emph{unbounded} recursion.

\item We extend the BitML compiler to the new primitive,
  making it possible to execute renegotiations on Bitcoin.
  We accordingly extend the computational soundness result in~\cite{BZ18bitml},
  guaranteeing that the BitML semantics is coherent with the
  actual Bitcoin executions, also in the presence of adversaries.

\item We exploit renegotiation 
  to design a new gambling game where two players repeatedly flip coins,
  and whoever wins twice in a row takes the pot
  (a form of unbounded recursion).
  We prove the game to be fair.

\item We introduce alternative renegotiation primitives,
  which allow participants to choose some parameters 
  (\eg the amounts to be deposited) at renegotiation time,
  and to change the set of participants involved in the renegotiated contract.
  We show that both primitives can be executed on Bitcoin \emph{as is}.
  We also introduce a primitive that, at the price of minor Bitcoin extensions, 
  supports \emph{non-consensual} renegotiations, 
  which are automatically triggered by the
  contract without requiring the participants’ agreement.

\end{itemize}

\noindent
Because of space constraints, we relegate part of the technicalities to~\Cref{app}.

\section{BitML with renegotiation and recursion}
\label{sec:bitml}

We start by formalising contract preconditions.
We use $\pmvA, \pmvB, \ldots$ to range over participants.
We assume a set of \emph{deposit names} $\varX, \varY, \ldots$,
a set of deposit variables $\varphX,\varphY,\hdots$,
and a set of \emph{secret names} $\secrA, \secrB, \ldots$.
We use $\dmv,\dmvi,\hdots$ to range over deposit names and variables,
and $\valV, \valVi$ to range over non-negative values.

\begin{definition}[\textbf{Contract precondition}]
  \label{def:bitml:precondition}
  Contract preconditions have the following syntax
  (the deposits $\dmv$ in a contract precondition $\contrG$ must be distinct):
  \begin{align*}
    \contrG \bnfdef \;\;
      & \persdep{\pmvA}{\valV}{\dmv}
      && \text{deposit of $\valV \BTC$ put by $\pmvA$}
    \\
    \bnfmid\;
      & \presecret{\pmvA}{\secrA} 
      && \text{secret committed by $\pmvA$}
    \\
    \bnfmid\;
      & \contrG \mid \contrG
      && \text{composition}
         \tag*{$\qedef$}
  \end{align*}
\end{definition}

The precondition $\persdep{\pmvA}{\valV}{\dmv}$ 
requires $\pmvA$ to own $\valV \BTC$ in a deposit $\dmv$,
and to spend it for stipulating the contract.
The precondition $\presecret{\pmvA}{\secrA}$ requires $\pmvA$ to generate a
secret $\secrA$, and commit to it before the contract starts. 
After stipulation, $\pmvA$ can choose whether to disclose 
the secret $\secrA$, or not.

To define contracts,
we assume a set of recursion variables, ranged over by
$\cVarX,\cVarY,\hdots$,
and a language of \emph{static expressions} $\sexp,\sexpi,\hdots$, 
formed by integer constants $k$, 
integer variables $\procParamA,\procParamB,\hdots$,
and the usual arithmetic operators.
We omit to define the syntax and semantics of static
expressions, since they are standard.
We assume that a closed static expression evaluates to a 32-bit value.
We use the bold notation for sequences, \eg 
$\vec{\varX}$ denotes a finite sequence of deposit names.

\begin{definition}[\textbf{Contract}]
  \label{def:bitml:contract}
  Contracts are terms with the syntax in~\Cref{fig:bitml:contract}, where:
  \begin{inlinelist}
  \item each recursion variable $\cVarX$ has a unique defining equation 
    $\decl{\cVarX}{\vec{\procParamA}} = \contrAdv{\contrG}{\contrC}$;
    \item renegotiations $\adv{\callX{\vec{\sexp}}}$ 
    have the correct number of arguments;
  \item the names $\vec{\secrA}$ in
    $\putC[p]{}{\vec{\secrA}}$ are distinct, 
    and they include those occurring in $p$;
  \item in a prefix $\,\splitC{\splitB{\vec{\valV}}{\vec{\contrC}}}$,
    the sequences $\vec{\valV}$ and $\vec{\contrC}$ have the same length.
  \end{inlinelist}
  We denote with $\cnil$ the empty sum. 
  We assume that the order of decorations 
  is immaterial,
  \eg, $\afterC{\sexp}{\authC{A}{\authC{B}{\contrD}}}$ is equivalent to
  $\authC{B}{\authC{A}{\afterC{\sexp}{\contrD}}}$.
  \hfill{$\qedef$}
\end{definition}

A contract $\contrC$ is a choice among guarded contracts $\contrD[i]$. 
A guarded contract $\putC[\predP]{}{\vec{\secrA}} .\, \contrCi$ 
continues as $\contrCi$
once all the secrets $\vec{\secrA}$ have been revealed
and satisfy the predicate $\predP$.
The guarded contract 
\mbox{$\splitC{(\splitB{\valV[1]}{\contrC[1]} \mid \cdots \mid \splitB{\valV[n]}{\contrC[n]})}$}
divides the contract into $n$ contracts $\contrC[i]$, 
each one with balance $\valV[i]$. 
The sum of the $\valV[i]$ must coincide with the current balance.
The action $\withdrawC{A}$ transfers the whole balance to $\pmvA$.
When enabled, the above actions can be fired by anyone at anytime. 
To restrict \emph{who} can execute a branch and \emph{when}, 
one can use the decoration $\authC{\pmvA}{\contrD}$, 
requiring to wait for $\pmvA$'s authorization,
and the decoration $\afterC{\sexp}{\contrD}$,
requiring to wait until the time specified by the static
expression $\sexp$.
The action $\adv{\callX{\vec{\sexp}}}$ allows the participants 
involved in the contract to renegotiate it.
Intuitively, if
$\decl{\cVarX}{\vec{\procParamA}} = \contrAdv{\contrG}{\contrC}$,
then the contract continues as 
$\contrC\setenum{\bind{\vec{\procParamA}}{\vec{\sexp}}}$  
if all the participants give their authorization,
and satisfy the precondition $\contrG$.

\begin{figure*}[t]
  \small
  \begin{minipage}{0.62\textwidth}
    \begin{align*}
      \contrC 
                 & \bnfdef 
                   \textstyle \sum_{i \in I} \contrD[i]
                                                       && \text{contract}
      \\[4pt]
      \contrD
                 & \bnfdef
                                                       && \text{guarded contract}
      \\
                 & \putC[\predP]{}{\vec{\secrA}} . \, \contrC
                                                       && \text{reveal secrets (if $\predP$ is true)} 
      \\[2pt]
      \bnfmid
                 & \withdrawC{\pmvA}
                                                       && \text{transfer the balance to $\pmvA$}
      \\[2pt]
      \bnfmid
                 & \splitC{\splitB{\vec{\valV}}{\vec{\contrC}}}
                                                       && \text{split the balance}
      \\[2pt]
      \bnfmid
                 & \authC{\pmvA}{\contrD}
                                                       && \text{wait for $\pmvA$'s authorization}
      \\[2pt]
      \bnfmid
                 & \afterC{\sexp}{\contrD}
                                                       && \text{wait until time $\sexp$}
      \\[2pt]
      \bnfmid
                 & \adv{\callX{\vec{\sexp}}}
                                                       && \text{renegotiate the contract}
    \end{align*}
  \end{minipage}
  \begin{minipage}{0.35\textwidth}
    \begin{align*}
      \predP 
      \bnfdef \; &
           \true
      && \text{truth}
      \\
      \bnfmid
                 & \predP \land \predP
        && \text{conjunction}
      \\
      \bnfmid
                 & \neg \predP
        && \text{negation}
      \\
      \bnfmid
                 & \expE = \expE 
        && \text{equality} 
      \\
      \bnfmid
                 & \expE < \expE 
        && \text{less than}
      \\
      \expE
      \bnfdef \; &
           \sexp
      && \text{static expression}
      \\
      \bnfmid
                 & \secrA
        && \text{secret}
      \\
      \bnfmid
                 & \expE + \expE
        && \text{addition} 
      \\
      \bnfmid
                 & \expE - \expE
        && \text{subtraction}
    \end{align*}
  \end{minipage}
  \caption{Syntax of BitML contracts.}
  \label{fig:bitml:contract}
\end{figure*}

\begin{definition}[\textbf{Contract advertisement}]
  \label{def:bitml:contrAdv}
  A contract advertisement is a term $\contrAdv{\contrG}{\contrC}$ 
  such that:
  \begin{inlinelist}
  \item each secret name in $\contrC$ occurs in $\contrG$; 
  \item \label{item:bitml:contrAdv:persistent}
    $\contrG$ requires a deposit from each 
    $\pmvA$ in $\contrAdv{\contrG}{\contrC}$;
  \item each $\adv{\callX{\vec{\sexp}}}$ in $\contrC$ refers to
    a defining equation
    $\decl{\cVarX}{\vec{\procParamA}} = \contrAdv{\contrGi}{\contrCi}$
    where the participants in $\contrGi$ are the same as those in
    $\contrG$.
    \hfill{$\qedef$}
  \end{inlinelist}
\end{definition}

The second condition is used to guarantee that the contract is stipulated 
only if \emph{all} the involved participants give their authorizations.
The last condition is only used to simplify the technical development.
We outline in~\Cref{sec:discussion} how to relax it, 
by allowing renegotiations to exclude some participants,
or to include new ones, which were not among those who originally 
stipulated the contract.

We now extend the reduction semantics of BitML~\cite{BZ18bitml},
by focussing on the new renegotiation primitive.
Because of space limitations, here we just provide the underlying intuition,
relegating the full formalisation to~\Cref{app}.
We start by defining the configurations of the semantics.

\begin{definition}[\textbf{Configuration}]
  \label{def:bitml:conf}
  Configurations have the following syntax:
  \begin{align*}
    \confG \bnfdef \;\;
    & \cnil
    && \text{empty} 
    \\
    \bnfmid
    & \contrAdv[\varX]{\contrG}{\contrC}
    && \text{contract advertisement (name $\varX$ is optional)}
    \\
    \bnfmid
    & \confContr[x]{\contrC}{\valV}
    && \text{active contract containing $\valV \BTC$}
    \\
    \bnfmid
    & \confDep[x]{\pmvA}{\valV}
    && \text{deposit of $\valV \BTC$ redeemable by $\pmvA$}
    \\
    \bnfmid
    & \confAuth{\pmvA}{\chi}
    && \text{authorization of $\pmvA$ to perform action $\chi$}
    \\
    \bnfmid
    & \confSec{\pmvA}{\secrA}{N}
    && \text{committed secret of $\pmvA$ ($N \in \Nat \cup \setenum{\bot}$)}
    \\
    \bnfmid
    & \confRev{\pmvA}{\secrA}{N}
    && \text{revealed secret of $\pmvA$ ($N \in \Nat$)}
    \\
    \bnfmid
    & \assign{\pmvA}{\varphX}{\varX}
    && \text{$\pmvA$'s deposit variable $\varphX$ assigned to deposit name $\varX$}
    \\
    \bnfmid
    & \confG \mid \confGi
    && \text{parallel composition}    
  \end{align*}
  We denote with $\confT{\confG}{\constT}$
  a \emph{timed} configuration, 
  where $\constT \in \Nat$ is a global time.
  \hfill\qedef
\end{definition}

We illustrate configurations and their semantics through a series of examples. 

\mypar{Deposits.}
A deposit $\confDep[x]{\pmvA}{\valV}$ can be subject to several operations,
like \eg split into two smaller deposits, join with another deposit,
transfer to another participant, or destroy. 
In all cases $\pmvA$ must authorise the operation.
For instance, to authorize the join of two deposits, 
$\pmvA$ can perform the following step:
\[
\small
\confDep[\varX]{\pmvA}{\valV} \mid \confDep[\varY]{\pmvA}{\valVi} 
\; \xrightarrow{} \;
\confDep[\varX]{\pmvA}{\valV} \mid \confDep[\varY]{\pmvA}{\valVi} \mid
\confAuth{\pmvA}{\authJoin{\varX}{\varY}{\pmvA}{\valV + \valVi}}  
\]
where $\authJoin{\varX}{\varY}{\pmvA}{\valV + \valVi}$ is
the authorization of $\pmvA$ to spend $\varX$.
After $\pmvA$ also provides the dual authorization to spend $\varY$,
anyone can actually join the deposits:
\[
\small
\confDep[x]{\pmvA}{\valV} \mid \confDep[y]{\pmvA}{\valVi} 
\mid
\confAuth{\pmvA}{\authJoin{\varX}{\varY}{\pmvA}{\valV + \valVi}} \mid 
\confAuth{\pmvA}{\authJoin{\varY}{\varX}{\pmvA}{\valV + \valVi}}
\; \xrightarrow{} \;
\confDep[z]{\pmvA}{\valV + \valVi} 
\]

\mypar{Advertisement.}
Any participant can broadcast a new contract advertisement
$\contrAdv{\contrG}{\contrC}$,
provided that all the deposits mentioned in $\contrG$ exist 
in the current configuration, 
and that the names of the secrets declared in $\contrG$ are fresh.

\mypar{Stipulation.}
To stipulate an advertised contract $\contrAdv{\contrG}{\contrC}$, 
all the participants mentioned in it
must fulfill the preconditions, and authorise the stipulation. 
For instance, let 
$\contrG = \persdep{\pmvA}{1}{x} \mid \persdep{\pmvB}{1}{y} \mid \presecret{\pmvA}{\secrA}$,
and let $\contrC$ be an arbitrary contract involving only $\pmvA$ and $\pmvB$.
The stipulation starts from a configuration
containing the advertisement and the participants' deposits:
\[
  \confG 
  \;\; = \;\; 
  \contrAdv{\contrG}{\contrC} \mid
  \confDep[x]{\pmvA}{1} \mid \confDep[y]{\pmvB}{1}
\]
At this point the participants must commit to their secrets
(in this case, only $\pmvA$ has a secret). 
This is rendered as a sequence of steps:
\[
\confG
\; \xrightarrow{}^* \;
\confG
\mid \confSec{\pmvA}{\secrA}{N}
\mid \confAuth{\pmvA}{\authCommit{\contrG}{\contrC}}
\mid \confAuth{\pmvB}{\authCommit{\contrG}{\contrC}}
\; = \; \confGi
\]
where $\confSec{\pmvA}{\secrA}{N}$ represents the fact that 
$\pmvA$ has committed to the secret $N$,
while 
$\confAuth{\pmvA}{\authCommit{\contrG}{\contrC}}$ 
and 
$\confAuth{\pmvB}{\authCommit{\contrG}{\contrC}}$ 
represent ending the commitment phase
(these steps might seem redundant, but they are useful
to obtain a step-by-step correspondence between BitML executions
and Bitcoin executions).

After that, $\pmvA$ and $\pmvB$ must perform an additional sequence of steps
to authorize the transfer of their deposits $\varX$, $\varY$ to the contract:
\[
\confGi
\; \xrightarrow{}^* \;
\confGi
\mid \confAuth{\pmvA}{\authAdv{\varX}{\contrG}{\contrC}}
\mid \confAuth{\pmvB}{\authAdv{\varY}{\contrG}{\contrC}}
\; = \; \confGii
\]
where $\confAuth{\pmvA}{\authAdv{x}{\contrG}{\contrC}}$ and 
$\confAuth{\pmvB}{\authAdv{y}{\contrG}{\contrC}}$ are
the authorizations to spend $\varX$ and $\varY$.

At this point all the needed authorizations have been given, 
so the advertisement can be turned into an active contract. 
This step consumes the deposits and all the authorizations,
and creates an active contract, with a fresh name $\varZ$:
\[
\confGii
\; \xrightarrow{} \;
\confContr[z]{\contrC}{2} \mid \confSec{\pmvA}{\secrA}{N} 
\]

\mypar{Renegotiation.}

We illustrate the steps to renegotiate
$\callX{\procParamA} = \contrAdv{\contrG}{\contrC}$, where
\(
\contrG = 
\persdep{\pmvA}{1}{\varphX} \mid
\persdep{\pmvB}{1}{\varphY} \mid
\presecret{\pmvA}{\secrA}
\),
and $\contrC$ is an arbitrary contract involving only $\pmvA$ and $\pmvB$, and 
possibly containing the integer variable $\procParamA$ in static expressions.
Here, $\contrG$ requires $\pmvA$ and $\pmvB$ to spend two $1\BTC$ deposits,
and $\pmvA$ to commit to a secret.
Unlike in the case of contract stipulation above,
deposits names are unknown before renegotiation, 
so we use the deposit variables $\varphX,\varphY$ to refer to them.

Consider a configuration
$\confContr[\varX]{\adv{\callX{\constK}} + \contrCii}{\valV} \mid
\confG$, where $\contrCii$ contains the branches alternative to
the renegotiation. 
A possible execution of the action $\adv{\callX{\constK}}$ starts as follows:
\[
\confContr[\varX]{\adv{\callX{\constK}} + \contrCii}{\valV} \mid \confG 
\; \xrightarrow{} \;
\confContr[\varX]{\adv{\callX{\constK}} + \contrCii}{\valV} 
\mid \contrAdv[\varX]{\contrGi}{\contrCi} \mid \confG
\; = \; \confGi
\]
where the advertisement $\contrAdv[\varX]{\contrGi}{\contrCi}$ is
obtained by transforming $\contrAdv{\contrG}{\contrC}$ as follows:
\begin{inlinelist}
\item variables $\varphX,\varphY$ are
  renamed into fresh ones $\varphXi,\varphYi$,
  and similarly the secret name $\secrA$ into $\secrAi$,
  %
\item the static expressions in $\contrC$ are evaluated, assuming
  $\procParamA = \constK$, and replaced with their results.
\end{inlinelist}
The superscript $\varX$ in the advertisement is used to record that,
when the renegotiation is concluded, the contract $\varX$ must be reduced.

In the subsequent steps participants choose the actual deposit names, 
and $\pmvA$ commits to her secret.
If $\pmvA$ owns in $\confG$ a deposit $\confDep[\varY]{\pmvA}{1}$,
she can choose $\varphXi = \varY$ to satisfy the precondition $\contrG$.
Similarly, $\pmvB$ can choose $\varphYi = \varZ$ if he owns such a deposit
in $\confG$.
These choices are performed as follows:
\begin{align*}
  \confGi 
  \; \xrightarrow{}^* \;
  \confGi 
  & \mid 
    \assign{\pmvA}{\varphXi}{\varY} \mid 
    \confSec{\pmvA}{\secrAi}{N} \mid \confAuth{\pmvA}{\authCommit[\varX]{\contrGi}
    {\contrCi}}
  \\ 
  & \mid \assign{\pmvB}{\varphYi}{\varZ} \mid \confAuth{\pmvB}
    {\authCommit[\varX]{\contrGi}{\contrCi}}
    \hspace{60pt} = \; \confGii                   
\end{align*}

At this point, participants must authorise spending their deposits
and the balance of the contract at $\varX$.
This is done through a series of steps:
\begin{align*}
  \confGii 
  \; \xrightarrow{}^* \;
  \confGii 
  & \mid \confAuth{\pmvA}{\authAdv[\varX]
    {\varY}{\contrGi}{\contrCi}} \mid 
    \confAuth{\pmvA}{\authAdv[\varX]{\varX}{\contrGi}{\contrCi}}
  \\ 
  & \mid 
  \confAuth{\pmvB}
    {\authAdv[\varX]{\varZ}{\contrGi}{\contrCi}} \mid 
    \confAuth{\pmvB}{\authAdv[\varX]{\varX}{\contrGi}{\contrCi}}
    \; = \; \confGiii
\end{align*}

Finally, the new contract is stipulated. 
This closes the old contract, 
and transfers its balance to the newly generated one,
with a fresh name $\varXi$:
\[
\confGiii 
\; \xrightarrow{} \;
\confContr[\varXi]{\contrCi}{\valV + 2} 
\mid \confG
\]

Note that the branches in $\contrCii$ are discarded only in the last step
above, where we complete the renegotiation.
Before this step, it would have been possible to take one of the
branches in $\contrCii$, aborting the renegotiation.

\mypar{Withdraw.}
Executing $\withdrawC{\pmvA}$ transfers the whole contract balance to~$\pmvA$:
\[
\confContr[x]{\withdrawC{\pmvA} + \contrCi}{\valV}   
\;\xrightarrow{}\;
\confDep[y]{\pmvA}{\valV} 
\]
After the execution, the alternative branch $\contrCi$ is discarded, 
and a fresh deposit of $\valV\BTC$ for $\pmvA$ is created.
Note that the active contract $\varX$ is terminated.

\mypar{Split.}
The $\splitname$ primitive divides the contract balance in $n$ parts,
each one controlled by its own contract.
For instance, if $n=2$:
\[
\confContr[x]
{(\splitC{\splitB{\valV[1]}{\contrC[1]} \mid \splitB{\valV[2]}{\contrC[2]}})
  + \contrCi}
{\valV[1]+\valV[2]} 
\;\xrightarrow{}\;
\confContr[y]{\contrC[1]}{\valV[1]} \mid 
\confContr[z]{\contrC[2]}{\valV[2]}
\]
After this step, the new spawned contracts $\contrC[1]$ and $\contrC[2]$
are executed concurrently.

\mypar{Reveal.}
The prefix $\putC[\predP]{}{\vec{\secrA}}$ can be fired if all the  
committed secrets $\vec{\secrA}$ have been revealed, 
and satisfy the guard $\predP$.
For instance, 
if $\confG = \confRev{\pmvA}{\secrA}{N} \mid \confRev{\pmvB}{\secrB}{N}$:
\[
  \confContr[x]
  {(\putC[\secrA=\secrB]{}{\secrA\secrB}.\, \contrC) + \contrCi}
  {\valV}
  \mid 
  \confG
  \; \xrightarrow{} \;
  \confContr[y]{\contrC}{\valV}
  \mid 
  \confG
\]
The terms $\confRev{\pmvA}{\secrA}{N}$ and $\confRev{\pmvB}{\secrB}{N}$
represent the fact that the secrets $\secrA$ and $\secrB$ have been revealed.
Crucially, only the participant who performed the commitment 
can add the corresponding term to the configuration.

\mypar{Authorizations.}
A branch decorated by $\authC{\pmvA}{\cdots}$ can be taken only 
if the participant $\pmvA$ has provided her authorization.
For instance:
\[
\confContr[x]
{\authC{\pmvA}{\withdrawC{\pmvB} + \contrCi}}
{\valV}
\mid
\confAuth{\pmvA}{\authBranch{x}{\authC{\pmvA}{\withdrawC{\pmvB}}}}
\;\xrightarrow{}\;
\confDep[y]{\pmvB}{\valV}
\]
The leftmost configuration contains the term
$\confAuth{\pmvA}{\authBranch{x}{\authC{\pmvA}{\withdrawC{\pmvB}}}}$,
which represents $\pmvA$'s authorization to take the branch 
$\withdrawC{\pmvB}$. 
This enables the step to be taken.
When multiple authorizations are required, 
the branch can be taken only after all of them occur in the configuration.

\mypar{Time constraints.}
We represent time in configurations as $\confT{\confG}{\constT}$,
where $\confG$ is the untimed part of the configuration and $\constT$ 
is the current time.
We always allow the time to advance through the rule
\(
\confT{\confG}{\constT}
\xrightarrow{}
\confT{\confG}{\constT+\delta}
\),
for all $\delta>0$.
A branch decorated with $\aftername \, d$ 
can be taken only if time $d$ has passed.
For instance:
\[
\confT
{\confContr[x]
  {\afterC{d}{\withdrawC{\pmvB}}}
  {\valV}}
{\constT}
\; \xrightarrow{} \;
\confT{\confDep[y]{\pmvB}{\valV}}{\constT}
\tag*{if $t \geq d$}
\]

\noindent
For the branches not guarded by $\aftername$,
we lift transitions from untimed to timed configurations:
namely, for an untimed transition $\confG \xrightarrow{} \confGi$, 
we also have the timed transition 
$\confT{\confG}{\constT} \xrightarrow{} \confT{\confGi}{\constT}$.
This reflects the assumption that participants 
can always meet deadlines, if they want to.

\section{Executing BitML on Bitcoin}
\label{sec:compiler}

To execute a BitML contract, participants first compile it to 
a set of Bitcoin transactions,
and then append these transactions to the blockchain,
each following their own strategy.
Participants' strategies can involve other actions
besides appending transactions, 
like \eg broadcasting signatures on given transactions
(which corresponds, in BitML, to add an authorization to the configuration),
revealing secrets, and waiting some time
(see Definition 16 in~\cite{BZ18bitml}).
The coherence between the BitML semantics and the execution on Bitcoin 
is guaranteed by a step-by-step correspondence between 
the transitions of the BitML semantics 
and the actions performed by participants on the Bitcoin network.

In this~\namecref{sec:compiler} we illustrate the compiler and the 
execution protocol through a couple of examples, 
focussing on the new renegotiation primitive.
The needed background on Bitcoin will be introduced along with
these examples.
We relegate the formal definition of the compilation rules to~\Cref{app:compiler}.

\paragraph{Zero-coupon bond.}
Recall the $\ZCB$ contract from~\Cref{sec:intro}:
\begin{align*}
  \ZCB 
  & \; = \;
    \splitname \; \big( 
    \splitB{1}{\withdrawC{\pmvB}}
    \; \mid \;
    \splitB{2}{\afterC{\text{2030}}{\withdrawC{\pmvA}}}
    \big)
\end{align*}
The precondition  
$\persdep{\pmvA}{1}{\varX[1]} \mid \persdep{\pmvB}{2}{\varX[2]}$
requires $\pmvA$ to deposit $1\BTC$ in the contract, 
and $\pmvB$ to deposit $2\BTC$.
In Bitcoin, this precondition corresponds to requiring
two unspent transactions redeemable by $\pmvA$ and $\pmvB$,
and containing the required amounts.
We represent these transactions as follows, 
using the notation in~\cite{bitcointxm}:
\begin{nscenter}
  \small
  \begin{tabular}[t]{|l|}
    \hline
    \multicolumn{1}{|c|}{$\txT[x_1]$} \\
    \hline
    \txIn{$\cdots$} \\
    \txWit{$\cdots$} \\
    \txOut{$(\lambda \varX. \versig{\compileK{\pmvA}}{\varX}, 1 \BTC)$} \\
    \hline
  \end{tabular}
  \qquad
  \begin{tabular}[t]{|l|}
    \hline
    \multicolumn{1}{|c|}{$\txT[x_2]$} \\
    \hline
    \txIn{$\cdots$} \\
    \txWit{$\cdots$} \\
    \txOut{$(\lambda \varX. \versig{\compileK{\pmvB}}{\varX}, 2 \BTC)$} \\
    \hline
  \end{tabular}
\end{nscenter}

The transaction $\txT[x_1]$ is a record with three fields
($\txT[x_2]$ is similar).
The $\txIn{}$ field points to one or more previous transactions
in the blockchain.
The field $\txOut{}$ is a pair, whose first element is a boolean predicate
(with parameter $\varX$),
and the second element, $1 \BTC$, is the amount that a subsequent
transaction satisfying the predicate can redeem from $\txT[x_1]$.
Here, the predicate $\versig{\compileK{\pmvA}}{\varX}$ is true when
$\varX$ is a signature of $\pmvA$ on the redeeming transaction
(\ie, one having $\txT[x_1]$ as $\txIn{}$).

\begin{figure}[t]
  \centering
  \resizebox{0.95\columnwidth}{!}{
    \small
    \begin{tabular}{c}
      %
      \begin{tabular}{|l|}
        \hline
        \\[-9pt]
        \multicolumn{1}{|c|}{$\txT[init]$} \\
        \hline
        \\[-9pt]
        \txIn{$\begin{array}{l}
                 0 \mapsto \txT[\txColor{\varX[{1}]}],\,
                 1 \mapsto \txT[\txColor{\varX[{2}]}]
               \end{array}$} \\
        \txWit{$\begin{array}{l}
                  0 \mapsto \sig{\compileK{\pmvA}}{},\,
                  1 \mapsto \sig{\compileK{\pmvB}}{}
                \end{array}$} \\
        \txOut{$(\lambda \vec{\varSig} . \versig{\compileK{\ZCB,\setenum{\pmvA,\pmvB}}}{\vec{\varSig}}, 3 \BTC)$\!} \\
        \hline
      \end{tabular}
      \qquad
      %
      \begin{tabular}{|l|}
        \hline
        \\[-9pt]
        \multicolumn{1}{|c|}{$\txT[\pmvB]$} \\
        \hline
        \\[-9pt]
        \txIn{($\txT[split]$,0)} \\
        \txWit{$\sig{\compileK{\withdrawC{\pmvB},\setenum{\pmvA,\pmvB}}}{}$} \\
        \txOut{$(\lambda \varSig . \versig{\compileK{\pmvB}}{\varSig}, 1 \BTC)$\!} \\
        \hline
      \end{tabular}
      \\[30pt]
      %
      \begin{tabular}{|l|}
        \hline
        \\[-9pt]
        \multicolumn{1}{|c|}{$\txT[split]$} \\
        \hline
        \\[-9pt]
        \txIn{$\txT[init]$} \\
        \txWit{$\sig{\compileK{\ZCB,\setenum{\pmvA,\pmvB}}}{}$} \\
        \txOut{$\begin{array}{l} 
                  0 \mapsto (\lambda \vec{\varSig} . \versig{\compileK{\withdrawC{\pmvB},\setenum{\pmvA,\pmvB}}}{\vec{\varSig}}, 1 \BTC) \\
                  1 \mapsto (\lambda \vec{\varSig} . \versig{\compileK{\afterC{2030}{\withdrawC{\pmvA}},\setenum{\pmvA,\pmvB}}}{\vec{\varSig}}, 2 \BTC)
                \end{array}$\!} \\
        \hline
      \end{tabular}
      \qquad
      %
      \begin{tabular}{|l|}
        \hline
        \\[-9pt]
        \multicolumn{1}{|c|}{$\txT[\pmvA]$} \\
        \hline
        \\[-9pt]
        \txIn{($\txT[split]$,1)} \\
        \txWit{$\sig{\compileK{\afterC{2030}{\withdrawC{\pmvA}},\setenum{\pmvA,\pmvB}}}{}$} \\
        \txOut{$(\lambda \varSig . \versig{\compileK{\pmvA}}{\varSig}, 2 \BTC)$\!} \\
        \txAfterAbs{2030}{} \\
        \hline
      \end{tabular}
    \end{tabular}
  } 
  \caption{Transactions obtained by compiling the \ZCB contract.}
  \label{fig:zcb:tx}
\end{figure}

The contract $\ZCB$ is compiled into the transactions in~\Cref{fig:zcb:tx}.
The first one that can be appended to the blockchain is $\txT[init]$.
This requires a few conditions to be met:
\begin{inlinelist}
\item $\txT[x_1]$ and $\txT[x_2]$ are \emph{unspent} on the blockchain, 
  \ie no other transactions spend them; 
\item the amount specified in the $\txOut{}$ field of $\txT[init]$
  does not exceed the sum of the amounts in $\txT[x_1]$ and $\txT[x_2]$;
\item the predicates in the $\txOut{}$ fields of $\txT[x_1]$ and 
  $\txT[x_2]$ are true, after replacing the formal parameters with
  the signatures $\sig{\compileK{\pmvA}}{}$ and $\sig{\compileK{\pmvB}}{}$,
  contained in the $\txWit{}$ field of $\txT[init]$.
\end{inlinelist}
The contract $\ZCB$ becomes stipulated once $\txT[init]$ is on the blockchain.

After that, the $\splitname$ action can be performed 
by either $\pmvA$ or $\pmvB$, 
by redeeming $\txT[\it init]$ with $\txT[{\it split}]$. 
This transaction uses
$\compileK{\ZCB,\setenum{\pmvA,\pmvB}}$, a set of two key pairs, 
each one owned by each participant. 
These keys are only used in this step,
to ensure that no transaction but $\txT[\it split]$ can redeem $\txT[init]$.

The transaction $\txT[split]$ creates two unspent outputs
(indexed by $0$ and $1$),
corresponding to the two parallel components of the $\splitname$, 
each with its own balance. 
These outputs can be redeemed independently, by different transactions.
The output at index $0$ 
can only be redeemed by $\txT[\pmvB]$
(note that $\txT[\pmvB]$'s $\txIn{}$ field refers to 
the output 0 of $\txT[{\it split}]$),
transferring $1\BTC$ to $\pmvB$.  
No other redemption is possible,
since such output requires a signature with a specific key set,
\ie $\compileK{\withdrawC{\pmvB},\setenum{\pmvA,\pmvB}}$, 
which is not used for any other purpose.  
Further, the output of $\txT[\pmvB]$ can
be redeemed with $\pmvB$'s key, without $\pmvA$'s one. 
Similarly, the output $1$ of $\txT[split]$ can be redeemed by
$\txT[\pmvA]$, which in turns transfers $2\BTC$ to $\pmvA$. 
The $\txAfterAbs{}$ field in $\txT[\pmvA]$ 
ensures that this may only happen after time $2030$.

The stipulation protocol followed by participants requires that 
all the signatures needed to append the transactions 
in~\Cref{fig:zcb:tx}
are exchanged \emph{before} $\txT[init]$ is appended.
This is obtained by exchanging the signatures of $\txT[init]$
after all the other signatures.
This ensures that, once the execution of $\ZCB$ starts,
any honest participant can make it proceed, 
by appending a transaction
that correspond to any of the enabled BitML actions.

In general, to guarantee that such liveness property holds, 
the contract must be suitably crafted, using the
$\contrD + \afterC{t}{\contrDi}$
pattern discussed in~\Cref{sec:intro}.
In~\Cref{sec:conclusions} we discuss techniques
to statically verify this property.

\paragraph{Zero-coupon bond with renegotiation.}

Compiling $\ZCBi$ yields the transactions:

\medskip
\begin{nscenter}
\resizebox{1.0\columnwidth}{!}{
  \hspace{-10pt}
  \small
  \begin{tabular}{|l|}
    \hline
    \\[-9pt]
    \multicolumn{1}{|c|}{$\txT[init]$} \\
    \hline
    \\[-9pt]
    \txIn{\hspace{5pt}$\txT_{\txColor{\varX[{1}]}}$} \\
    \txWit{\hspace{1pt}$\sig{\compileK{\pmvA}}{}$} \\
    \txOut{\!$\begin{array}{l}
              (\lambda \vec{\varSig} . \versig{\compileK{\ZCBi,\setenum{\pmvA,\pmvB}}}{\vec{\varSig}}, \\
              \, 1 \BTC)
              \end{array}$\!} \\
    \hline
  \end{tabular}
  \begin{tabular}{|l|}
    \hline
    \\[-9pt]
    \multicolumn{1}{|c|}{$\txT[split]$} \\
    \hline
    \\[-9pt]
    \txIn{$\txT[init]$} \\
    \txWit{$\sig{\compileK{\ZCBi,\setenum{\pmvA,\pmvB}}}{}$} \\
    \txOut{$\begin{array}{l} 
              0 \mapsto (\lambda \vec{\varSig} . \versig{\compileK{\withdrawC{\pmvB},\setenum{\pmvA,\pmvB}}}{\vec{\varSig}}, 1 \BTC) \\
              1 \mapsto (\lambda \vec{\varSig} . \versig{\compileK{\adv{\callX{}},\setenum{\pmvA,\pmvB}}}{\vec{\varSig}},\, 0 \BTC)
            \end{array}$\!} \\
    \hline
  \end{tabular}
  \begin{tabular}{|l|}
    \hline
    \\[-9pt]
    \multicolumn{1}{|c|}{$\txT[\pmvB]$} \\
    \hline
    \\[-9pt]
    \txIn{\hspace{5pt}($\txT[split]$,0)} \\
    \txWit{$\sig{\compileK{\withdrawC{\pmvB},\setenum{\pmvA,\pmvB}}}{}$} \\
    \txOut{$\!\begin{array}{l}
              (\lambda \varSig . \versig{\compileK{\pmvB}}{\varSig}, \\
              \, 1 \BTC)
            \end{array}$\!} \\
    \hline
  \end{tabular}
} 
\end{nscenter}

\medskip
Once these three transactions are on the blockchain, 
the only enabled action in the corresponding BitML contract 
is $\adv{\callX{}}$, which asks $2\BTC$ from $\pmvB$ as a precondition.
At the Bitcoin level, satisfying this precondition
requires $\pmvB$ to broadcast the identifier of a transaction
$\txT[y]$ holding $2 \BTC$ and redeemable by himself.
In BitML, this corresponds to choosing the deposit name $\varY$
for the deposit variable~$\varphX$.
Then, participants compile the contract advertisement
$\contrAdv{\persdep{\pmvB}{2}{\varphX}}{\contrC}$,
where $\contrC = \afterC{\text{2030}}{\withdrawC{\pmvA}}$,
after replacing $\varphX$ with $\varY$.
The compiler produces the following transactions:

\medskip
\begin{nscenter}
\resizebox{0.7\textwidth}{!}{
  \small
  \begin{tabular}{|l|}
    \hline
    \\[-9pt]
    \multicolumn{1}{|c|}{$\txT[\it init]^{\!\!\!\!\!\!\!\!\!\!\callX{}}$} \\
    \hline
    \\[-9pt]
    \txIn{\hspace{6.5pt}$0 \mapsto (\txT[split],1),1 \mapsto \txT[\varY]$} \\
    \txWit{}:
    \begin{tabular}{l}
    $0 \mapsto\sig{\compileK{\adv{\callX{}},\setenum{\pmvA,\pmvB}}}{}$ \\
    $1 \mapsto \sig{\compileK{\pmvB}}{}$
    \end{tabular}
    \\
    \txOut{\hspace{1pt}$(\lambda \vec{\varSig} . \versig{\compileK{\contrC,\setenum{\pmvA,\pmvB}}}{\vec{\varSig}}, 2 \BTC)$} \\
    \hline
  \end{tabular}
  \qquad
  \begin{tabular}{|l|}
    \hline
    \\[-9pt]
    \multicolumn{1}{|c|}{$\txT[\pmvA]$} \\
    \hline
    \\[-9pt]
    \txIn{$\txT[init]^{\!\!\!\!\!\!\!\!\!\!\callX{}}$} \\
    \txWit{$\sig{\compileK{\contrC,\setenum{\pmvA,\pmvB}}}{}$} \\
    \txOut{$(\lambda \varSig . \versig{\compileK{\pmvA}}{\varSig}, 2 \BTC)$} \\
    \txAfterAbs{2030}\\
    \hline
  \end{tabular}
}
\end{nscenter}

\medskip
The renegotiation succeeds once $\txT[\it init]^{\!\!\!\!\!\!\!\!\!\!\callX{}}$ 
is on the blockchain.
After that, any participant can perform the $\withdrawC{\pmvA}$,
by appending $\txT[\pmvA]$ to the blockchain.

\medskip
As shown by this example, the compiler handles renegotiation as follows:
\begin{itemize}
  
\item at stipulation time, it does not produce transactions 
  for any $\adv{\callX{}}$;
  
\item at renegotiation time,
  the participants broadcast the identifiers of their new deposits, 
  and the commitments of their new secrets. 
  Static expressions are then evaluated, and replaced by their value.
  Finally, the new contract is compiled as usual, with the exception
  that the new initial transaction has an extra input, 
  which transfers the balance of the caller contract to the callee
  (in the $\ZCBi$ example, this extra input is $(\txT[split],1)$ within 
  $\txT[\it init]^{\!\!\!\!\!\!\!\!\!\!\callX{}}\;$).

\end{itemize}

\paragraph{Computational soundness}
The main result of~\cite{BZ18bitml} is computational soundness, which
ensures that each execution trace at the Bitcoin level has a
corresponding one in the semantics of BitML.
This was achieved by formalizing the semantics of Bitcoin using a
computational model, where participants can exchange bitstrings as
messages, and append transactions to the blockchain.
Then, a coherence relation was defined to relate symbolic runs to
computational ones, essentially matching symbolic moves with their
implementation in Bitcoin.

Our extension of BitML with renegotiation still enjoys computational
soundness.
The argument is similar, and requires extending the coherence relation
to the new primitive.
In particular, the reduction: 
\[
\contrAdv[\varX]{\contrG}{\contrC} \mid \confG
\; \xrightarrow{} \;
\contrAdv[\varX]{\contrG}{\contrC} \mid \confG \mid 
\mmid_{i} \,\confSec{\pmvA}{\secrA[i]}{N_i} \mid
\mmid_{j} \,\assign{\pmvA}{\varphX[j]}{\varX[j]} \mid 
\confAuth{\pmvA}{\authCommit[\varX]{\contrG}{\contrC}}
\]
corresponds, in Bitcoin, to $\pmvA$ broadcasting
a message which contains the hashes of her secrets and the transaction 
identifiers that she wishes to use as deposits.

Instead, the reduction:
\[
\contrAdv[\varX]{\contrG}{\contrC} 
 \mid 
\confG 
\; \xrightarrow{} \;
\contrAdv[\varX]{\contrG}{\contrC} 
\mid 
\confG 
\mid
\confAuth{\pmvA}{\authAdv[\varX]{x}{\contrG}{\contrC}}
\]
corresponds to $\pmvA$ signing all the transactions obtained by compiling
the new contract, and broadcasting the signatures.
A participant signs $\txT[\it init]$ only after receiving the
signatures of the other transactions from all the other participants.

Computational soundness requires that each contract involves at least one 
participant, say $\pmvA$, who follows the Bitcoin implementation of BitML.
In particular, $\pmvA$ follows the stipulation and renegotiation
protocols correctly, \ie signing nothing but the protocol messages,
and signing $\txT[init]$ last.
We also make the usual assumptions on computational adversaries: they
can only run PPTIME algorithms, and they can break the
underlying cryptography with negligible probability, only.
Consequently, we only consider computational runs of polynomial
length (with respect to the security parameter).
This is because in longer runs the adversary would be able to break
the cryptography by brute force.

Below, we provide an intuitive statement of computational soundness.
The formal statement is in~\Cref{app:computational-soundness}.
\begin{theorem}[Computational soundness]
  \label{th:computational-soundness}
  Under the hypotheses above, each Bitcoin-level computational run has
  a corresponding coherent BitML run, with overwhelming probability.
\end{theorem}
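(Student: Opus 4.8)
The plan is to lift the computational soundness proof of the original BitML~\cite{BZ18bitml} to the extended calculus, reusing its overall architecture and isolating $\adv{\callX{\vec{\sexp}}}$ as the only essentially new case. As in~\cite{BZ18bitml}, I would phrase the result as a simulation: fixing a PPTIME adversary and a computational strategy $\stratC{\Adv}$, I would construct a matching symbolic strategy and then prove, by induction on the length of the computational run $\runC$, that there exists a symbolic run $\runS$ with $\coherRel{\runS}{\runC}{\rndR}$. The empty run gives the base case; the inductive step amounts to showing that every computational move extending $\runC$ is either matched by a symbolic move that preserves coherence, or else forces the adversary to produce a signature or a hash preimage it has not legitimately obtained, an event of negligible probability. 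Summing over the polynomially many steps keeps the total failure probability negligible, which yields the ``overwhelming probability'' qualifier.

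The technical core is to extend the coherence relation so that the renegotiation protocol of~\Cref{sec:compiler} is faithfully tracked. Guided by the two reductions displayed just above the theorem, I would add clauses relating
\begin{inlinelist}
\item the step spawning the advertisement $\contrAdv[\varX]{\contrG}{\contrC}$ to the broadcast of that advertisement;
\item the combined commitment-and-assignment step, producing the terms $\confSec{\pmvA}{\secrA[i]}{N_i}$, $\assign{\pmvA}{\varphX[j]}{\varX[j]}$ and $\confAuth{\pmvA}{\authCommit[\varX]{\contrG}{\contrC}}$, to $\pmvA$ broadcasting the hashes of her fresh secrets together with the identifiers of the deposits she offers;
\item each authorization $\confAuth{\pmvA}{\authAdv[\varX]{\varX}{\contrG}{\contrC}}$ to $\pmvA$ broadcasting her signatures on all transactions compiled from the callee; and
\item the concluding stipulation step to the appearance on the blockchain of the callee's initial transaction, whose extra input spends the output of the caller contract $\varX$.
\end{inlinelist}
For each clause I would verify both directions of the inductive step: a symbolic move is realisable by the prescribed computational actions, and, conversely, any computational action the adversary can actually mount with the keys it holds is accounted for by some clause; the latter relies, as in the original proof, on the freshness of the dedicated key sets $\compileK{\cdot}$, so that outputs cannot be redeemed in unintended ways.

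The main obstacle I expect is twofold. First, because $\adv{\callX{}}$ enables \emph{unbounded} recursion, a coherent symbolic run need not terminate, so the induction cannot simply follow the full symbolic run; instead I would exploit the restriction to computational runs of polynomial length, which caps the number of renegotiation rounds and thereby both keeps the construction well-founded and bounds the number of cryptographic guesses available to the adversary. Second, the extra input that links the caller's output to the callee's initial transaction must be shown to be the \emph{only} way to redeem that output: this is the delicate point where coherence is transferred across the renegotiation boundary, and I would discharge it by extending the key-freshness argument already used for $\splitname$ and $\withdrawname$ to the dedicated key set compiled for $\adv{\callX{}}$, ruling out every alternative redemption except signature forgery, which falls under the negligible-probability bound.
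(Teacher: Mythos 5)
Your proposal follows essentially the same route as the paper: the paper's own argument (in \Cref{sec:compiler} and \Cref{app:computational-soundness}) is exactly a lift of the coherence-based simulation proof of~\cite{BZ18bitml}, extending the coherence relation so that the commitment/assignment reduction matches $\pmvA$ broadcasting the hashes of her secrets and her deposit identifiers, the authorization reduction matches broadcasting signatures on the compiled callee transactions (with $\txT[init]$ signed last), and the concluding stipulation step matches appending the callee's initial transaction with the extra input spending the caller's output, all under the same honest-participant, PPTIME-adversary, and polynomial-length-run assumptions. Your write-up is in fact more explicit than the paper's, which leaves the inductive case analysis and the key-freshness/forgery bound implicit by reference to the original BitML proof.
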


\section{A fair recursive coin flipping game}
\label{sec:cfg}

\newcommand{\CFG}{\ensuremath{\contrFmt{\it CFG}}\xspace}
\newcommand{\Split}[1][]{\ensuremath{\contrFmt{{\it Split}_{#1}}}\xspace}
\newcommand{\CFGindent}{\hspace{40pt}}
\newcommand{\CFGindentNP}{\hspace{30pt}}
\newcommand{\CFGtab}{\hspace{10pt}}

To illustrate recursion in our extended BitML,
we introduce a simple game
where two players repeatedly flip coins,
and the one who wins two consecutive flips takes the pot.
The precondition requires each player to deposit $3\BTC$ and choose a secret:
\begin{align*}
  \persdep{\pmvA}{3}{x} \;\mid\; \presecret{\pmvA}{\secrA}
  \; \mid \;
  \persdep{\pmvB}{3}{y} \;\mid\; \presecret{\pmvB}{\secrB}
\end{align*}

\begin{figure}[t]
  \small
  \[
  \begin{array}{l}
    \contrFmt{\it CFG}
    =
    \hspace{6pt}
    \putC[0 \leq \secrB \leq 1]{}{\secrB} . \big( 
    \\
    \CFGindent\CFGtab
    \putC[\secrA = \secrB]{}{\secrA\secrB}. \;
    (
    \adv{\call{\cVarX[\pmvA]}{1}}
    \, + \,
    \afterC{3}{\Split[\pmvA]}
    )
    \\
    \CFGindent
    + \;\putC[\secrA \neq \secrB]{}{\secrA\secrB} . \; 
    (
    \adv{\call{\cVarX[\pmvB]}{1}}
    \, + \,
    \afterC{3}{\Split[\pmvB]}
    )
    \\ 
    \CFGindent
    + \;
    \afterC{2}{\withdrawC{\pmvB}}
    \big)
    \\
    \CFGindentNP
    + \; \afterC{1}{\withdrawC{\pmvA}}
    \\[4pt]
    \call{\cVarX[\pmvA]}{n} 
    = \contrAdv{\presecret{\pmvA}{\secrA} \mid 
    \presecret{\pmvB}{\secrB}}{} 
    \\
    \CFGindent
    \putC[0 \leq \secrB \leq 1]{}{\secrB} . \big( 
    \\
    \CFGindent\CFGtab
    \putC[\secrA = \secrB]{}{\secrA\secrB} . \; \withdrawC{\pmvA}
    \\
    \CFGindent
    + \;\putC[\secrA \neq \secrB]{}{\secrA\secrB} . \;
    (
    \adv {\call{\cVarX[\pmvB]}{n + 1}} 
    \, + \,
    \afterC{(3n+3)}{\Split[\pmvB]}
    )
    \\ 
    \CFGindent + \;
    \afterC{(3n+2)}{\withdrawC{\pmvB}} \big)
    \\
    \CFGindentNP
    + \; \afterC{(3n+1)}{\withdrawC{\pmvA}}
    \\[4pt]
    \call{\cVarX[\pmvB]}{n} 
    = \contrAdv{\presecret{\pmvA}{\secrA} \mid 
    \presecret{\pmvB}{\secrB}}{}
    \\
    \CFGindent
    \putC[0 \leq \secrB \leq 1]{}{\secrB} . \big( 
    \\
    \CFGindent\CFGtab
    \putC[\secrA = \secrB]{}{\secrA\secrB} . \;
    (
    \adv{\call{\cVarX[\pmvA]}{n + 1}}
    \; + \;
    \afterC{(3n+3)}{\Split[\pmvA]}
    )
    \\
    \CFGindent
    + \;\putC[\secrA \neq \secrB]{}{\secrA\secrB} . \; 
    \withdrawC{\pmvB}
    \\ 
    \CFGindent + \;
    \afterC{(3n+2)}{\withdrawC{\pmvB}} \big)
    \\
    \CFGindentNP
    + \; \afterC{(3n+1)}{\withdrawC{\pmvA}}
    \\[4pt]
    \Split[\pmvA] 
    =
    \splitC{(
    \splitB{4}{\withdrawC{\pmvA}} \mid 
    \splitB{2}{\withdrawC{\pmvB}})}
    \\[4pt]
    \Split[\pmvB] 
    =
    \splitC{(
    \splitB{4}{\withdrawC{\pmvB}} \mid 
    \splitB{2}{\withdrawC{\pmvA}})}
  \end{array}
  \]
  \vspace{-10pt}
  \caption{A recursive coin flipping game.}
  \label{fig:cfg}
\end{figure}

The contract \CFG (\Cref{fig:cfg})
asks $\pmvB$ to reveal his secret first:
if $\pmvB$ waits too much, $\pmvA$ can withdraw
the contract funds after time 1.
Then, it is $\pmvA$'s turn to reveal 
(before time 2, otherwise $\pmvB$ can withdraw the funds).
The current flip winner is $\pmvA$
if the secrets of $\pmvA$ and $\pmvB$ are equal, 
otherwise it is $\pmvB$.
At this point, the contract can be renegotiated as 
$\call{\cVarX[\pmvA]}{1}$ or $\call{\cVarX[\pmvB]}{1}$, 
depending on the flip winner
(the parameter $1$ represents the round).
If players do not agree on the renegotiation,
then the funds are split fairly, according to the current expected win.

The contract $\call{\cVarX[\pmvA]}{n}$ requires $\pmvA$ and $\pmvB$
to generate fresh secrets for the $n$-th turn.
If $\pmvA$ wins again, she can withdraw the pot, 
otherwise the contract can be renegotiated as $\call{\cVarX[\pmvB]}{n+1}$.
If the players do not agree on the renegotiation,
the pot is split fairly between them. 
The contract $\cVarX[\pmvB]$ is similar.

The following theorem states that our coin flipping game is fair.
Fairness ensures that the expected payoff of a \emph{rational} player
is always non-negative, notwithstanding the behaviour of the other player.
Rational players must choose random secrets in $\setenum{0,1}$.
Indeed, non uniformly distributed secrets 
can make the adversary bias the coin flip in her favour.
Further, choosing a secret different from $0$ or $1$
would decrease the player payoff.
Indeed, $\pmvB$ would be prevented from revealing his secrets
(by the predicate in the $\putC{}{\secrB}$), 
and so $\pmvA$ could win after the timeout.
If $\pmvA$ chooses a secret different from $0$ or $1$,
she makes $\pmvB$ win the round
(since $\pmvB$ wins when the secrets are different).
Rationality also requires to reveal secrets in time 
(before the alternative $\aftername$ branch is enabled),
and to take the $\Split$ branch if restipulation does not occur in time.
This ensures that, when renegotiation happens, 
there is still time to reveal the round secrets.
Indeed, a late renegotiation could enable the other player 
to win by timeout.

\begin{theorem}
  The expected payoff of a rational player
  is always non-negative.
\end{theorem}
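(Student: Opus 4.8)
The plan is to reduce the claim to the symbolic BitML semantics and there analyse \CFG as an absorbing random walk. By computational soundness (\Cref{th:computational-soundness}), every Bitcoin-level run of a rational player corresponds, except with negligible probability, to a coherent symbolic run, and the two expected payoffs differ only negligibly; hence it suffices to bound the expected payoff in the symbolic model. I would first fix the player under scrutiny, say \pmvA, and pin down her rational strategy $\sigma$: commit to a uniformly random secret in $\setenum{0,1}$; reveal it within its time window; agree to a renegotiation $\adv{\callX{}}$ only while enough time remains to play the ensuing round, and otherwise fire the matching $\Split$ branch before its \aftername deadline; and fire the winning $\withdrawC{\pmvA}$ as soon as \pmvA has won two consecutive flips. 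The case of \pmvB is dual, obtained by exchanging the two families of timeout branches, so I would carry the argument out for a generic rational player with these roles as parameters.

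The first key step is a \emph{fair-flip lemma}: conditioned on \pmvA choosing her secret uniformly in $\setenum{0,1}$, the event that the two revealed secrets are equal (which makes \pmvA win the current flip) has probability exactly $\nicefrac{1}{2}$, whatever \pmvB does. This holds because the commitments are computationally hiding and binding, so \pmvB's committed value is independent of \pmvA's and cannot be altered after \pmvA reveals, and because the equality of a uniform bit with any fixed bit is unbiased. If instead \pmvB reveals a value outside $\setenum{0,1}$ or withholds it past the deadline, then the guard $0 \leq \secrB \leq 1$ or an $\afterC{1}{\withdrawC{\pmvA}}$-style branch lets \pmvA take the whole pot, which only improves her payoff.

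Next I would model the post-stipulation game as a Markov chain on the states $\setenum{\text{$\pmvA$-ahead},\text{$\pmvB$-ahead}}$ with absorbing outcomes ``\pmvA takes the pot'' and ``\pmvB takes the pot''. By the fair-flip lemma each flip either absorbs or switches state with probability $\nicefrac{1}{2}$, yielding \pmvA winning probabilities $\nicefrac{2}{3}$ from $\pmvA$-ahead, $\nicefrac{1}{3}$ from $\pmvB$-ahead, and $\nicefrac{1}{2}$ from the opening flip. As the pot is $6\BTC$, \pmvA's fair continuation values are $4\BTC$, $2\BTC$, and $3\BTC$; crucially these are exactly the amounts that $\Split[\pmvA]$ and $\Split[\pmvB]$ pay to \pmvA. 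Thus firing a $\Split$ branch returns \pmvA precisely her fair continuation value, so whether a renegotiation is concluded or declined, \pmvA's expectation is unchanged. I would then set the potential $\varphi$ equal to these fair values and prove, by induction over rounds, that under $\sigma$ the expected payoff against \emph{any} \pmvB-strategy is at least $\varphi$ of the current state: at every branch point the winning $\withdrawC{\pmvA}$ gives $6 \geq \varphi$, each timeout withdraw gives $6 \geq \varphi$, each $\Split$ gives exactly $\varphi$, and a continued round preserves $\varphi$ in expectation by the fair-flip lemma. Evaluated at the root this gives expected gross payoff $\geq 3\BTC$, i.e.\ expected net payoff $\geq 0$ after subtracting the $3\BTC$ deposit.

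Since the recursion is unbounded, I would close the induction by noting that absorption occurs almost surely, as the probability of surviving $k$ flips is $2^{-k}$, so the expected payoff is well defined; and because only computational runs of polynomial length are considered, the contribution of rounds beyond that horizon is negligible and absorbed into the soundness error. The main obstacle I anticipate is the timing analysis underlying the rationality clauses: I must verify that, once a flip is resolved and a renegotiation is either completed or abandoned, there is always sufficient slack before the next deadlines $3n+1,\,3n+2,\,3n+3$ for \pmvA to reveal her next secret or to fire the $\Split$ branch, so that \pmvB can never trap \pmvA into losing by timeout through a deliberately late renegotiation. Showing that $\sigma$'s ``renegotiate only with enough time, otherwise $\Split$'' clause is always executable, and that it dominates every adversarial delay, is the delicate point; the Markov computation and the identity \emph{$\Split$ payoff $=$ continuation value} are then routine.
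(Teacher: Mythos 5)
Your proposal is correct and follows essentially the same route as the paper's proof sketch: both arguments rest on the fair-flip observation (each flip is won with probability $\nicefrac{1}{2}$ by a player committing a uniform bit in $\setenum{0,1}$) combined with the identity that the $\Split$ branches pay out exactly the fair continuation value, your Markov-chain values of $4\BTC$ and $2\BTC$ being precisely the paper's solution of $p = \nicefrac{1}{2}\cdot 6 + \nicefrac{1}{2}\cdot(\nicefrac{1}{2}\cdot p + \nicefrac{1}{2}\cdot 0)$. The additional scaffolding you supply (the computational-soundness bridge, the potential-function induction, almost-sure absorption) is an elaboration of that same argument rather than a different approach, and the timing issue you flag as the delicate point is exactly what the paper disposes of by building ``reveal in time, and take the $\Split$ branch if renegotiation is not completed in time'' into its definition of a rational player.
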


\begin{proof}[Sketch]
  First, we consider the case where renegotiation always happens.  A
  rational player wins each coin flip with probability $1/2$, at
  least: so, the probability of winning the whole game is also $1/2$,
  at least.  In the general case, the renegotiation at the end of each
  round may fail. When this happens, the rational player takes the
  $\Split$ branch, distributing the pot according to the expected
  payoff in the \emph{current} game state, thus ensuring the fairness
  of the whole game. The player who won the last coin flip is expected
  to win $p\BTC$, with
  $p = \nicefrac{1}{2}\cdot 6 +
  \nicefrac{1}{2}\cdot(\nicefrac{1}{2}\cdot p + \nicefrac{1}{2}\cdot
  0)$, giving $p=4$.  Accordingly, the $\Split$ contracts transfer
  $4\BTC$ to the winner of the last flip and $(6-4)\BTC = 2\BTC$ to
  the other player.
\end{proof}

\section{More expressive renegotiation primitives}
\label{sec:discussion}

The renegotiation primitive we have proposed for BitML 
is motivated by its simplicity, and by the possibility of compiling 
into standard Bitcoin transactions. 
By adding some degree of complexity, we can devise more general primitives, 
which could be useful in certain scenarios.
We discuss below some alternatives.

\paragraph{Renegotiation-time parameters.}

The primitive $\adv{\callX{\vec{\sexp}}}$ 
allows participants to choose at run-time only the deposit variables 
used in the renegotiated contracts,
and to commit to new secrets.
A possible extension is to allow participants
to choose at run-time \emph{arbitrary} values for the renegotiation 
parameters $\vec{\sexp}$.

For instance, consider a mortgage payment, 
where a buyer $\pmvA$ must pay $10\BTC$ to a bank $\pmvB$ in 10 installments.
After $\pmvA$ has paid the first five installments (of $1\BTC$ each), 
the bank might propose to renegotiate the contract, 
varying the amount of the installment.
Using the BitML renegotiation primitive presented in~\Cref{sec:bitml}, 
we could not model this contract, since the new amount and the number 
of installments are unknown at the time of the original stipulation. 
Technically, the issue is that the primitive $\adv{\callX{\vec{\sexp}}}$
only involves static expressions $\sexp$, 
the value of which is determined at stipulation time. 

\newcommand{\IPP}[2][]{\ensuremath{\cVar{IPP}{\langle {#2} \rangle}_{#1}}\xspace}

To cope with non-statically known values, 
we could extend guarded contracts with terms of the form 
$\adv{\callX{\pmvB : \valV}}$, 
declaring that the value $\valV$ is to be chosen by $\pmvB$ 
at renegotiation time.
For instance, this would allow to model our installments payment plan
as $\IPP{1}$, with the following defining equations:
\begin{align*}
  \IPP{\procParamA<5}
  & = \contrAdv{\persdep{\pmvA}{1}{\varphX}}
    {\big(
    \splitC{\splitB{1}{\withdrawC{\pmvB}} \mid \splitB{0}{\adv{\IPP{\procParamA+1}}}}
    \big)}
  \\
  \IPP{5}
  & = \contrAdv{\persdep{\pmvA}{1}{\varphX}}
    {\big(
    \splitC{\splitB{1}{\withdrawC{\pmvB}} \mid \splitB{0}{\adv{\callY{\pmvB:k,\pmvB:v}}}}
    \big)}
  \\
  \callY{\procParamA \neq 1,\procParamB}
  & = \contrAdv{\persdep{\pmvA}{\procParamB}{\varphX}}
    {\big(
    \splitC{\splitB{\procParamB}{\withdrawC{\pmvB}} \mid \splitB{0}{\adv{\callY{\procParamA-1,\procParamB}}}}
    \big)}
  \\
  \callY{1,\procParamB}
  & = \contrAdv{\persdep{\pmvA}{\procParamB}{\varphX}}
    {
    \;\withdrawC{\pmvB}
    }
\end{align*}
where in $\IPP{5}$, the bank chooses the number of installments $k$,
as well as the amount $v$ of each installment.
Note that if $\pmvA$ does not agree with these values,
the renegotiation fails. 
A more refined version of the contract should take this possibility into 
account, by adding suitable compensation branches.
Although adding the new primitive would moderately increase the complexity 
of the semantics and of the compiler, 
this extension can still be implemented on top of standard Bitcoin, 
preserving our computational soundness result.

\paragraph{Renegotiation with a given set of participants.}

\newcommand{\PayOrRefund}{\ensuremath{\contrFmt{\it PayOrRefund}}\xspace}
\newcommand{\Refund}[1][]{\ensuremath{\cVar{Refd}{}_{#1}}\xspace}

As we have remarked in~\Cref{sec:bitml},
a renegotiation can be performed only if \emph{all} 
the participants of the contract agree.
To generalise, we could require the agreement of 
a \emph{given} set of participants 
(possibly, not among those who originally stipulated the contract).

For instance, consider an escrow service between a buyer $\pmvA$ 
and a seller $\pmvB$ for the purchase of an item worth $1\BTC$.
The normal case is that the buyer authorizes the transfer of $1\BTC$ 
after receiving the item, but it may happen that a dishonest seller 
never sends the item, or that a dishonest buyer never authorizes the payment.
To cope with these cases, the participants can renegotiate the contract, 
including an escrow service $\pmvM$ which mediates the dispute,
as follows:
\begin{align*}
  & \authC{\pmvA}{\withdrawC{\pmvB}} \, + \, \authC{\pmvB}{\withdrawC{\pmvA}} 
   + \; \extadv{\pmvA:\pmvM}{\call{\Refund[\pmvA]}{}} 
    \, + \, \extadv{\pmvB:\pmvM}{\call{\Refund[\pmvB]}{}}
  \\[2pt]
  & \Refund[\pmvP]
  = \contrAdv{\persdep{\pmvP}{0.1}{\varphX}}
    {\;\splitC \big(
    \splitB{0.1}{\withdrawC{\pmvM}} \mid \splitB{1}{\withdrawC{\pmvP}}}
    \big)
\end{align*}
where $\extadv{\pmvA:\pmvM}{\call{\Refund[\pmvA]}{}}$
means that only $\pmvA$ and $\pmvM$ need to agree in order 
for the contract $\Refund[\pmvA]$ to be executed, resolving the dispute.
In this case it is crucial that the renegotiation is possible 
even without the agreement between $\pmvA$ and $\pmvB$.
Indeed, if $\pmvM$ decides to refund $\pmvA$
(by authorizing $\Refund[\pmvA]$),
it is not to be expected that also $\pmvB$ agrees.
Similarly to the one discussed before, 
also this extension can be implemented on-top of Bitcoin.
The computational soundness property is preserved, under the assumption that 
at least one participant in any renegotiation is \emph{honest},
\ie it follows the renegotiation protocol.
Crucially, if a renegotiation only involves dishonest participants,
the renegotiated contract could be anything, 
not necessarily that prescribed in the original contract.

\paragraph{Non-consensual renegotiation.}
In the variants of \mbox{$\adv{}$} discussed before, 
renegotiation requires one or more participants to agree.
Hence, each use of \mbox{$\adv{}$} must include 
suitable alternative branches,
to be fired in case the renegotiation fails.
In certain scenarios, we may want to renegotiate the contract 
without the participants having to agree. 
To this purpose, we can introduce a new primitive
$\ncadv{\callX{}}$, which continues as $\callX{}$ without
requiring anyone to agree.
For simplicity, we assume the defining equations
of this primitive of the form
$\decl{\cVarX}{\vec{\procParamA}} = \contrAdv{\valV}{\contrC}$,
where $\valV$ represents the amount of \BTC added to the contract, 
by anyone.

We exemplify the new primitive to design
a two-players game which starts with  
a bet of $1\BTC$ from $\pmvA$, and a bet of $2\BTC$ from $\pmvB$. 
Then, starting from $\pmvA$, players take turns adding $2\BTC$ each to the pot. 
The first one who is not able to provide the additional $2\BTC$ within a given time 
loses the game, allowing the other player to take the whole pot.
The contract is as follows:
\begin{align*}
\contrC 
  & = \setenum{ \persdep{\pmvA}{1}{\varX} \mid \persdep{\pmvB}{2}{\varY}} 
  (\ncadv{\call{\cVarX[\pmvA]}{2}} + \afterC{1}{\withdrawC{\pmvB}})
  \\
  \call{\cVarX[\pmvA]}{n}
  & = \setenum{ 2 } 
    (\ncadv{\call{\cVarX[\pmvB]}{n+1}} + \afterC{n}{\withdrawC{\pmvA}})
  \\
  \call{\cVarX[\pmvB]}{n}
  & = \setenum{ 2 } 
    (\ncadv{\call{\cVarX[\pmvA]}{n+1}} + \afterC{n}{\withdrawC{\pmvB}})
\end{align*}

Unlike \mbox{$\adv{}$}, the action $\ncadv{}$
can be fired without the authorizations of all the players: 
it just requires that the authorization to gather $2 \BTC$ 
is provided, by anyone.
Even though the sender of these $2 \BTC$ is not specified in the contract,
it is implicit in the game mechanism:
for instance, when $\call{\cVarX[\pmvA]}{n}$ calls 
$\call{\cVarX[\pmvB]}{n+1}$, only $\pmvB$ is incentivized to add $2\BTC$,
since not doing so will make $\pmvA$ win.

Implementing the $\ncadv{}$ primitive on top of Bitcoin seems unfeasible: 
even if it were possible to use complex off-chain multiparty computation 
protocols \cite{Gudgeon19iacr}, 
doing so might be impractical. 
Rather, we would like to extend Bitcoin as much as needed 
for the new primitive. 
In our implementation of BitML, 
we compile contracts to sets of transactions and make participants sign them. 
In standard BitML this is doable since, at stipulation time, 
we can finitely over-approximate the reducts of the original contract. 
Recursion can make this set infinite, 
\eg $\call{\cVarX[\pmvA]}{2}, \call{\cVarX[\pmvA]}{3}, \ldots$, 
hence impossible to compile and sign statically. 
A way to cope with this is to extend Bitcoin with \emph{malleable} 
signatures which only cover the part of the transaction not affected 
by the parameter $n$ in $\call{\cVarX[\pmvB]}{n}$. 
Further, signatures must not cover the $\txIn{}$ fields of transactions, 
since they change as recursion unfolds.
In this way, the same signature can be reused for each call. 

Adding malleability provides flexibility, but poses some risks. 
For instance, instead of redeeming the transaction corresponding to 
$\call{\cVarX[\pmvA]}{n}$ with the transaction of 
$\call{\cVarX[\pmvB]}{n+1}$
one could instead use the transaction of 
$\call{\cVarX[\pmvB]}{n+100}$,
since the two transactions have the same signature. 
To overcome this problem, we could add a new opcode 
to allow the output script of $\call{\cVarX[\pmvB]}{n}$ 
to access the parameter in the redeeming transaction, 
so to verify that it is indeed $n+1$ as intended.
Similarly, to check that we have $2\BTC$ more in the new transaction, 
an opcode could provide the value of the new output.
The same goal could be achieved by adapting the
techniques used in~\cite{Moser16bw,Oconnor17bw} to realize \emph{covenants}.

\section{Conclusions}
\label{sec:conclusions}

We have investigated linguistic primitives to renegotiate BitML contracts, and 
their implementation on standard Bitcoin.
More expressive primitives could be devised
by relaxing this constraint, 
\eg assuming the extended UTXO model~\cite{Chakravarty20wtsc}.

The existing verification technique for BitML~\cite{BZ19post}
is based on a sound and complete abstraction of the state space of contracts.
Since this abstraction is finite-state, 
it can be model-checked to verify the required properties.
The same technique can be directly applied to BitML contracts
featuring renegotiation (but without recursion), since the abstraction
would remain finite.
Instead, the same abstraction on recursive contracts would lead to
infinitely many states.
Even if we could exploit the fact that Bitcoin uses 32-bit integers
to make the state space finite, it would still be too large
for verification to be practical.

If we assume that integers are unbounded, 
and that participants always accept renegotiations,
the extension of BitML presented in~\Cref{sec:bitml} 
can simulate a counter machine,
so making BitML Turing-complete.
Hence, verification cannot be sound and complete.
Alternative techniques to model checking 
(\eg, type-based approaches~\cite{Das19arxiv})
could be used to analyse relevant contract properties.

  \paragraph{Acknowledgements}
Massimo Bartoletti is partially supported by Aut.\ Reg.\ of Sardinia projects \textit{Sardcoin} and \textit{Smart collaborative engineering}.
Maurizio Murgia and Roberto Zunino are partially supported by MIUR PON \textit{Distributed Ledgers for Secure Open Communities}.

\bibliographystyle{splncs04}
\bibliography{main}

\appendix

\newpage
\section{Supplementary material}
\label{app}

\subsection{Semantics of BitML}

We denote with $\PartT$ the non-empty set of the \emph{honest} participants.
We use notation $\contrAdv[\circ]{\contrG}{\contrC}$ to refer to a contract 
advertisement of the form $\contrAdv{\contrG}{\contrC}$, or
$\contrAdv[\varX]{\contrG}{\contrC}$ where $\varX$ is immaterial.
$\sem{\sexp}$ is the evaluation of static expression $\sexp$, 
defined as expected.
Note that the evaluation function is undefined if $\sexp$ contains free
variables. 
We overload $\sem{-}$ to contracts and contract advertisements:
$\sem{\contrC}$ is the contract obtained by substituting all the occurring 
static expressions with their valuation, and 
$\sem{\contrAdv{\contrG}{\contrC}} = \contrAdv{\contrG}{\sem{\contrC}}$. 
We write $\equiv_{\alpha}$ for equivalence of contract advertisement up-to 
$\alpha$-conversion of secret and deposit names.
We write $\contrAdv{\contrG}{\contrC} \equiv \callX{\vec{\sexp}}$ if:
\begin{inlinelist}
\item $\decl{\cVarX}{\vec{\procParamA}} = \contrAdv{\contrGi}{\contrCi}$;
\item $\sem{\contrAdv{\contrGi}{\contrCi}
\subst{\vec{\sem{\sexp}}}{\vec{\procParamA}}} 
\equiv_{\alpha} \contrAdv{\contrG}{\contrC}$.
\end{inlinelist}

The semantic rules for advertisement and stipulation are 
in~\Cref{fig:bitml:semantics:init}.
The rules for actions and for deposits are unchanged (see~\cite{BZ18bitml}).

\begin{figure*}
  \scalebox{.875}{
    \hspace{-20pt}
    \begin{tabular}{c}
      \(
      \irule
      {
      \confG \text{ contains } \confDep[{\varX[i]}]{\pmvA[i]}{\valV[i]} 
      \text{ for all } \persdep{\pmvA[i]}{\valV[i]}{\varX[i]}
      \text{ in } \contrAdv{\contrG}{\contrC}
      \qquad
      \text{$\secretC{\contrG}$ fresh}
      \qquad
      \text{$\partC{\contrAdv{\contrG}{\contrC}} \cap \PartT \neq \emptyset$}
      }
      {\confG 
      \xrightarrow{{\it advertise}(\contrAdv{\contrG}{\contrC})}
      \contrAdv{\contrG}{\contrC} \mid \confG}
      \smallnrule{[C-Adv]}
      \)
      \\[25pt]
      \(
      \irule
      {
      \begin{array}{l}
        \contrAdv{\contrG}{\contrCi} \equiv \callX{\vec{\sexp}}
        \\
        \text{$\secretC{\contrG}$ fresh}
        \\
        \text{$\varphX$ fresh, for each $\persdep{\pmvA}{\valV}{\varphX}$ in $\contrG$}
      \end{array}
      \qquad
      \begin{array}{l}
        \text{all names in } \contrG \text{ fresh}
        \\
        \contrAdv{\contrGi}{\contrCii} \not\equiv \callX{\vec{\sexp}},
        \text{for all } \contrAdv[\varX]{\contrGi}{\contrCii} \in \confG
        \\
        \confG \text{ contains } \confDep[{\varX[i]}]{\pmvA[i]}{\valV[i]}
        \text{ for all } \persdep{\pmvA[i]}{\valV[i]}{\varX[i]}
        \text{ in } \contrG
      \end{array}
      }
      {\confContr[\varX]{\adv{\callX{\vec{\sexp}}} + \contrC}{\valV} \mid \confG
      \xrightarrow{{\it advertise}(\contrAdv[\varX]{\contrG}{\contrCi})}
      \confContr[\varX]{\adv{\callX{\vec{\sexp}}} + \contrC}{\valV} \mid \contrAdv[\varX]{\contrG}{\contrCi} \mid \confG}
      \smallnrule{[C-Rngt]}
      \)
      \\[25pt]
      \(
      \irule
      {\begin{array}{c}
         \begin{array}{l}
           \secrA[1] \cdots \secrA[k] \text{ secrets of $\pmvA$ in $\contrG$} 
           \\[3pt]
           \varphX[1] \cdots \varphX[h] \text{ deposit variables of $\pmvA$ in $\contrG$} 
           \\[3pt]
           \forall i \in 1..h : \nexists \varX : (\assign{\pmvA}{\varphX[i]}{\varX}) \in \confG 
         \end{array}
         \quad
         \begin{array}{l}
           \forall i \in 1..k : \nexists N : \confSec{\pmvA}{\secrA[i]}{N} \in \confG 
           \\[3pt]
           \forall i \in 1..k : N_i \in \begin{cases}
             \Nat & \text{if $\pmvA \in \PartT$} \\
             \Nat \cup \setenum{\bot} & \text{otherwise}
           \end{cases}
           \\
         \end{array}
         \\[5pt]
         \pmvA \in \PartT \implies
         \begin{cases}
           \confG \text{ contains } \confDep[{\varX[i]}]{\pmvA}{\valV[i]}, \text{ for all } i \in 1..h, \text{ and }\\
           {\varX[i]} \neq \varX[j] \text{ for all }i \neq j \in 1..h, 
           \text{ and}
           \\
           {\varX[i]} \neq \varX 
           \text{ for all }i  \in 1..h, \varX \text{ such that }
           \exists \valV : \persdep{\pmvA}{\valV}{\varX}\in\contrG
         \end{cases}
         \\[15pt]
         \confD = \confSec{\pmvA}{\secrA[1]}{N_1} \mid \cdots \mid \confSec{\pmvA}{\secrA[k]}{N_k}\mid\assign{\pmvA}{\varphX[1]}{\varX[1]}\mid\hdots\mid
         \assign{\pmvA}{\varphX[h]}{\varX[h]}
       \end{array}
      }
      {\contrAdv[\circ]{\contrG}{\contrC} \mid \confG
      \xrightarrow{\authLab{\pmvA}{\contrAdv[\circ]{\contrG}{\contrC},\confD}}
      \contrAdv[\circ]{\contrG}{\contrC} \mid \confG \mid \confD
      \mid
      \confAuth{\pmvA}{\authCommit[\circ]{\contrG}{\contrC}}
      }
      \smallnrule{[C-AuthCommit]}
      \)
      \\[25pt]
      \(
      \irule
      {
      \confG \text{ contains } \confAuth{\pmvB}{\authCommit[\circ]{\contrG}{\contrC}}
      \text{ for all $\pmvB$ in $\contrG$} 
      \qquad
      \text{$\contrG = \persdep{\pmvA}{\valV}{\dmv} \mid \cdots$}
      \qquad
      \confG \vdash \dmv = \varX}
      {\contrAdv[\circ]{\contrG}{\contrC} \mid \confG
      \xrightarrow{\authLab{\pmvA}{\contrAdv[\circ]{\contrG}{\contrC},x}}
      \contrAdv[\circ]{\contrG}{\contrC} \mid \confG \mid \confAuth{\pmvA}{\authAdv[\circ]{x}{\contrG}{\contrC}}
      }
      \smallnrule{[C-AuthInitDep]}
      \)
      \\[25pt]
      \(
      \irule
      {
      \confG \text{ contains } \confAuth{\pmvB}{\authCommit[\varX]{\contrG}{\contrC}}
      \text{ for all $\pmvB$ in $\contrG$}}
      {\contrAdv[\varX]{\contrG}{\contrC} \mid \confG
      \xrightarrow{\authLab{\pmvA}{\contrAdv{\contrG}{\contrC},\varX}}
      \contrAdv[\varX]{\contrG}{\contrC} \mid \confG \mid \confAuth{\pmvA}{\authAdv[\varX]{x}{\contrG}{\contrC}}
      }
      \smallnrule{[C-AuthInitContr]}
      \)
      \\[25pt]
      \(
      \irule{
      \begin{array}{c}
        \contrAdv{\contrG}{\contrC} \equiv \callX{\vec{\sexp}}
        \quad
        \varY \text{ fresh}
        \quad
        \contrG = 
        \big( \mmid_{i \in I} \persdep{\pmvA[i]}{\valV[i]}{\varX[i]} \big) \mid
        \big( \mmid_{i \in J} \persdep{\pmvB[i]}{\valVi[i]}{\varphX[i]} \big) \mid
        \big( \mmid_{i \in K} \presecret{\pmvC[i]}{\secrA[i]} \big)
        \\[3pt]
        \confD = \big( \mmid_{i \in I} \confDep[{\varX[i]}]{\pmvA[i]}{\valV[i]} \big)
        \mid
        \big( \mmid_{i \in J} \confDep[{\varXi[i]}]{\pmvB[i]}{\valVi[i]} \big)
        \mid
        \big( \mmid_{i \in J} \assign{\pmvB[i]}{\varphX[i]}{\varXi[i]} \big)
        \\[3pt]
        \mid
        \big( \mmid_{\pmvA \in \contrG} \confAuth{\pmvA}{\authCommit[\varX]{\contrG}{\contrC}} \mid \confAuth{\pmvA}{\authAdv[\varX]{x}{\contrG}{\contrC}} \big)
        \mid
        \big( \mmid_{i \in I} \confAuth{\pmvA[i]}{\authAdv[\varX]{\varX[i]}{\contrG}{\contrC}} \big)
        \mid
        \big( \mmid_{i \in J} \confAuth{\pmvB[i]}{\authAdv[\varX]{\varXi[i]}{\contrG}{\contrC}} \big)
      \end{array}
      }
      {\confContr[\varX]{\adv{\callX{\vec{\sexp}}} + \contrCi}{\valV} \mid \contrAdv[\varX]{\contrG}{\contrC} \mid \confG \mid \confD
      \xrightarrow{{\it init}(\varX,\contrG,\contrC)}
      \confContr[\varY]{\contrC}{\sum_{i \in I} \valV[i] + \sum_{i \in J} \valVi[i] + \valV}
      \mid \confG
      }
      \smallnrule{[C-Init]}\)
      \\[25pt]
      \(
      \irule{}{\confG \vdash \varX = \varX} \qquad 
      \irule{}{\confG \mid \assign{\pmvA}{\varphX}{\varX}
      \vdash \varphX = \varX}
      \)
      \qquad
      $cv({\it init}(\varX,\contrG,\contrC)) = \setenum{\varX}$
    \end{tabular}
  }
  \vspace{0pt}
  \caption{Semantics of advertisement and stipulation.}
  \label{fig:bitml:semantics:init}
\end{figure*}

\subsection{Compiler}
\label{app:compiler}

\Cref{fig:compiler} shows the new rules to be added to the ones
in~\cite{BZ18bitml} for compiling renegotiations.

Compiling an advertisement used for renegotiation
$\contrAdv[\varZ]{\contrG}{\contrC}$ is similar to compiling a regular
advertisement $\contrAdv{\contrG}{\contrC}$.
The main difference is that, in the renegotiation case, we also have
to redeem the transaction output relative to the parent contract
$\varZ$, denoted with $\txMap{(\varZ)}$.
Therefore, the generated transaction $\txT[\it init]$ includes
$\txMap{(\varZ)}$ as an additional input.
Further, the value $\valV$ of $\txT[\it init]$ consists of the amounts
of the deposits and the balance of the parent contract $\varZ$,
denoted with $\valMap{(\varZ)}$.
The compiler also handles time-constrained renegotiations as in
$\afterC{\constT}{\adv{\callX{\vec{\sexp}}}}$ by using $\constT$ in
the $\txAfterAbs{}$ field of $\txT[\it init]$, so to prevent the
renegotiation to be completed earlier than $\constT$.

Note that compiling $\adv{\callX{\vec{\sexp}}}$ generates \emph{no
  transactions}.
This is because the transactions for $\callX{\vec{\sexp}}$ do not have
to be generated at stipulation time, but only at renegotiation time.

\subsection{Computational soundness}
\label{app:computational-soundness}

Below, $\stratMap(\stratS{\pmvA})$ denotes the computational strategy
obtained by following the symbolic strategy $\stratS{\pmvA}$
of honest participant $\pmvA$, as defined in~\cite{BZ18bitml}.
This is extended to deal with renegotiation using the same protocol as
stipulation, with minor differences as explained
in~\Cref{sec:compiler}.

\begin{theorem}[\textbf{Computational soundness}]
  \label{th:computational-soundness-formal}
  Let $\stratSSet$ be a set of symbolic strategies 
  for all $\pmvA \in \PartT$.
  Let $\stratCSet$ be a set of computational strategies
  such that $\stratC{\pmvA} = \stratMap(\stratS{\pmvA})$
  for all $\pmvA \in \PartT$,
  including an arbitrary adversary strategy $\stratC{\Adv}$.
  Fix $k \in \Nat$.
  Then, the following set has overwhelming probability:
  \[
  \Big\{ 
  \rndR 
  \;\Big|\;
  \begin{array}{l}
    \forall \runC \text{ conforming to $(\stratCSet,\rndR)$ with } |\runC| \leq 
\eta^k : \\
    \qquad
    \exists \runS 
    \text{ conforming to $(\stratSSet,\pi_1(\nonce{}))$ with }
    \coherRel{\runS}{\runC}{\rndR}
  \end{array}
  \Big\}
  \]
\end{theorem}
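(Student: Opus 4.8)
The plan is to build the coherent symbolic run $\runS$ incrementally, by induction on the length of the computational run $\runC$, reusing the coherence relation of~\cite{BZ18bitml} augmented with clauses for the renegotiation primitive. Since we restrict to runs of length at most $\eta^k$, only polynomially many computational moves occur; I would first isolate the \emph{bad events} in which the adversary forges a signature or exhibits a hash collision, and bound their total probability by a union bound over the (polynomially many) steps, obtaining a negligible quantity. Conditioning on the complement of this bad set, every signature appearing in $\runC$ was actually produced by the key owner, and every committed hash uniquely determines its preimage; this determinism is what makes the symbolic reconstruction well-defined and sound, so that the claimed set of tapes $\rndR$ coincides with the complement of the bad set and hence has overwhelming probability.

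For the inductive step I would assume a coherent pair $(\runS,\runC)$ and analyse a one-step computational extension $\runC \to \runC'$. The moves that do not concern renegotiation are handled exactly as in~\cite{BZ18bitml}: each is matched either by a corresponding symbolic transition or by a stuttering step leaving the symbolic configuration unchanged. The genuinely new cases are the computational actions implementing $\adv{\callX{\vec{\sexp}}}$, and here the key observation is that the renegotiation protocol mirrors the stipulation protocol. Concretely, a participant $\pmvA$ broadcasting the hashes of her fresh secrets together with the identifiers of her new deposits is matched by \nrule{[C-Rngt]} and \nrule{[C-AuthCommit]}, adding the terms $\confSec{\pmvA}{\secrA[i]}{N_i}$, $\assign{\pmvA}{\varphX[j]}{\varX[j]}$ and $\confAuth{\pmvA}{\authCommit[\varX]{\contrG}{\contrC}}$; $\pmvA$ broadcasting signatures on the compiled transactions of the callee is matched by \nrule{[C-AuthInitDep]} and \nrule{[C-AuthInitContr]}; and the appending of the callee's $\txT[\it init]^{\callX{}}$, which additionally redeems the parent output $\txMap{(\varZ)}$ and absorbs its value $\valMap{(\varZ)}$, is matched by \nrule{[C-Init]}.

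The hard part will be reconciling \emph{deferred compilation} with the coherence invariant. Unlike stipulation, the transactions realising $\callX{\vec{\sexp}}$ are produced only when the renegotiation is triggered, so the extended relation must track which contract outputs have already been materialised on-chain and guarantee that any transaction the adversary appends either corresponds to one of these or is rejected by the scripts. I would carry, as part of the invariant, that the one-time key set $\compileK{\adv{\callX{}},\setenum{\ldots}}$ guarding the parent output is used for no other purpose, so the only redeeming transaction the scripts accept is the intended $\txT[\it init]^{\callX{}}$; combined with the requirement that at least one honest $\pmvA$ signs $\txT[\it init]$ last, this rules out spurious computational runs lacking a symbolic counterpart.

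A second subtlety is \emph{recursion}: because $\callX{}$ can unfold without bound, the set of reachable contracts is infinite and cannot be compiled up front. This is precisely why the theorem bounds runs by $\eta^k$: within a polynomial-length run only finitely many unfoldings occur, each introducing finitely many fresh secrets and deposits whose freshness is secured by the side conditions of \nrule{[C-Rngt]}. I would observe that the static expressions $\vec{\sexp}$ evaluate to concrete $32$-bit values before compilation, so each unfolding yields a well-defined finite set of transactions and the invariant is preserved across unfoldings. Assembling the per-step reconstructions then produces a symbolic run $\runS$ conforming to $(\stratSSet,\pi_1(\nonce{}))$ with $\coherRel{\runS}{\runC}{\rndR}$ for every $\rndR$ outside the negligible bad set, which is the claim.
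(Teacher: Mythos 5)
Your proposal is correct and follows essentially the same route as the paper, which itself only sketches the argument: extend the coherence relation of~\cite{BZ18bitml} to the renegotiation primitive, match each computational step (broadcast of secret hashes and deposit identifiers, broadcast of signatures on the compiled callee transactions, appending of the callee's $\txT[\it init]$) with the corresponding symbolic rules \nrule{[C-Rngt]}, \nrule{[C-AuthCommit]}, \nrule{[C-AuthInitDep]}, \nrule{[C-AuthInitContr]}, \nrule{[C-Init]}, rely on the honest participant signing $\txT[\it init]$ last and on one-time keys to exclude spurious redemptions, and confine signature forgeries and hash collisions to a negligible bad event over the polynomially many steps allowed by the $\eta^k$ bound. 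Your explicit treatment of deferred compilation and of recursion under the polynomial run-length bound makes precise exactly the points the paper leaves implicit when it says the argument is ``similar'' to the stipulation case.
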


\begin{figure*}
  \small
  \[
  \begin{array}{c}
    \irule
    {\begin{array}{l}
       \contrG = 
       \big( \mmid_{i \in J} \persdep{\pmvA[i]}{\valVi[i]}{y_i} \big) \mid
       \big( \mmid_{i \in K} \presecret{\pmvB[i]}{a_i} \big)
       \\[4pt]
       \contrC = \sum_{i=1}^{m} \contrD[i]
       \hspace{70pt}
       \valV = \valMap{(\varZ)} + \sum_{i \in J} \valVi[i]
       \\[4pt]
       \expe_i = \compileDscript{\contrD[i]} \;\; (\forall i \in 1..m)
       \qquad
       \vec{x} = \biguplus_{i=1}^{m} \fv{\expe_i}
       \\[4pt]
       \vectxT[i] = \compileD{\contrD[i],\contrD[i],\txT[\it init],0,\valV,\PartGC,0} \;\; (\forall i \in 1..m)
     \end{array}
    \quad
    \scalebox{0.9}{
    \begin{tabular}{|l|}
      \hline
      \\[-8pt]
      \multicolumn{1}{|c|}{$\txT[\it init]$} \\
      \hline
      \\[-6pt]
      \txIn{$\begin{array}{l}
            0 \mapsto \txMap{(\varZ)} \\
            i+1 \mapsto \txMap{(y_i)} \;\; (\forall i \in J)
            \end{array}$%
            } \\[4pt]
      \txWit{$\bot$} \\[2pt]
      \txOut{$\big( \lambda \vec{x} . \bigvee_{i=1}^{m} \expe_i,\, \valV \big)$} \\[2pt]
      \txAfterAbs{$\constT$}{} \\[4pt]
      \hline
    \end{tabular}}
    }
    {\compile{\contrAdv[\varZ]{\contrG}{\contrC}, \constT}
    = \txT[\it init] \vectxT[1] \cdots \vectxT[m]
    }
    \\[20pt]
    \irule
    {\contrD = \adv{\callX{\vec{\sexp}}}
    }
    {\compileD{\contrD,\contrD[p],\txT,o,\valV,\PmvP,\constT} =
    \epsilon \quad \mbox{(empty sequence)}
    }
  \end{array}
  \]
  \caption{Two compilation rules (see~\cite{BZ18bitml} for the others).}
  \label{fig:compiler}
\end{figure*}


\newpage
\setcounter{tocdepth}{1}
\listoffixmes

\end{document}